\numberwithin{equation}{section}
\newtheorem{theorem}{Theorem}[section]
\newtheorem{lemma}[theorem]{Lemma}
\newtheorem{proposition}[theorem]{Proposition}
\newtheorem{corollary}[theorem]{Corollary}
\theoremstyle{definition}
\newtheorem{counterexample}[theorem]{Counterexample}
\newtheorem{definition}[theorem]{Definition}
\newcommand{\eps}{\varepsilon}
\newcommand{\area}{\textup{area}}
\newcommand{\R}{\mathbb{R}}
\newcommand{\N}{\mathbb{N}}
\newcommand{\pers}{\textup{pers}}
\newcommand{\cmon}[1][]{
	\ifthenelse{\isempty{#1}}{\mathbf{CMon}}{\mathbf{CMon}(#1)}
}
\newcommand{\ab}[1][]{
	\ifthenelse{\isempty{#1}}{\mathbf{Ab}}{\mathbf{Ab}(#1)}
}
\newcommand{\cmonti}[1][]{
	\ifthenelse{\isempty{#1}}{\mathbf{CMon}^{\mathbf{ti}}}{\mathbf{CMon}(#1)^{\mathbf{ti}}}
}
\newcommand{\abti}[1][]{
	\ifthenelse{\isempty{#1}}{\mathbf{Ab}^{\mathbf{ti}}}{\mathbf{Ab}(#1)^{\mathbf{ti}}}
}
\newcommand{\spt}{\textup{spt}}
\newcommand{\esssup}{\textup{ess sup}}
\newcommand{\Adm}{\textup{Adm}}
\newcommand{\Opt}{\textup{Opt}}
\newcommand{\OT}{\textup{OT}}
\newcommand{\w}{\mathbb{W}}
\newcommand{\ow}{\overline{\mathbb{W}}}
\newcommand{\supp}{\textup{supp}}
\def\moverlay{\mathpalette\mov@rlay}
\def\mov@rlay#1#2{\leavevmode\vtop{%
		\baselineskip\z@skip \lineskiplimit-\maxdimen
		\ialign{\hfil$\m@th#1##$\hfil\cr#2\crcr}}}
\newcommand{\charfusion}[3][\mathord]{
	#1{\ifx#1\mathop\vphantom{#2}\fi
		\mathpalette\mov@rlay{#2\cr#3}
	}
	\ifx#1\mathop\expandafter\displaylimits\fi}
\newcommand{\Expect}{{\rm I\kern-.3em E}}
\newcounter{hposcnt}
\renewcommand*{\thehposcnt}{hpos\number\value{hposcnt}}
\NewDocumentCommand{\lplabel}{o m}{%
	\stepcounter{hposcnt}%
	\zsaveposx{\thehposcnt l}%
	\zref@refused{\thehposcnt l}%
	\zref@refused{hpos0l}%
	\makebox[0pt][r]{\makebox[\dimexpr\zposx{\thehposcnt l}sp-\zposx{hpos0l}sp][l]{#2}}%
	\IfNoValueF{#1}
	{\def\@currentlabel{#2}\ltx@label{#1}}
}
\newlist{thmenum}{enumerate}{1}
\setlist[thmenum, 1]{label=(\arabic*), ref=\thetheorem ~(\arabic*)}
\title{Learning on Persistence Diagrams as Radon Measures} 
\author[A. Elchesen]{Alex Elchesen}
\address{Department of Mathematics, Colorado State University, USA}
\author[I. Hartsock]{Iryna Hartsock}
\address{Department of Mathematics, University of Florida, USA}
\author[J. A. Perea]{Jose A. Perea}
\address{Department of Mathematics and Khoury College of Computer Sciences, Northeastern University, USA}
\author[T. Rask]{Tatum Rask}
\address{Department of Mathematics, Colorado State University, USA}
\begin{document}

\maketitle

\begin{abstract}
Persistence diagrams are common descriptors of the topological structure of data appearing in various classification and regression tasks. They can be generalized to Radon measures supported on the birth-death plane and endowed with an optimal transport distance. Examples of such measures are expectations of probability distributions on the space of persistence diagrams. In this paper, we develop methods for approximating continuous functions on the space of  Radon measures supported on the birth-death plane, as well as their utilization in supervised learning tasks. Indeed,  we show that any continuous function defined on a compact subset of the space of such measures (e.g., a classifier or regressor) can be approximated arbitrarily well by polynomial combinations of features computed using a continuous compactly supported function on the birth-death plane (a template). We provide insights into the structure of relatively compact subsets of the space of Radon measures, and test our approximation methodology  on various data sets and supervised learning tasks.
\end{abstract}

\section{Introduction}\label{sec:intro}

Persistent homology is a popular tool in computational topology used to extract meaningful topological features from a dataset. The input to a persistent homology computation is a nested sequence $K_0\subset K_1\subset \dots \subset K_n$ of topological spaces (usually simplicial complexes). 
For example, given a dataset $X\subset \R^d$, computing the  \emph{Vietoris-Rips}  or  \emph{\^{C}ech} filtration of $X$ yields a nested family of simplicial complexes encoding the topology of $X$ across different scales. The output of a persistent homology computation is a \emph{persistence diagram}, a multiset of points supported on the birth-death plane $\w = \{(b,d) \in \R^2 \ | \ b < d\}$. Each point $(b,d)$ of a persistence diagram represents the lifetime of some topological feature across the filtration, with $b$ being the \emph{birth time} of a feature and $d$ being the \emph{death time}. When $X$ is a point cloud and we apply persistent homology to one of the filtrations described above, a point $(b,d)$ in the persistence diagram represents the scales over which a topological feature in $X$ persists.

While many recent algorithmic advances have greatly improved the speed of persistent homology computations, computing persistence can still be intractable for very large datasets. For this reason, it is common to compute persistence on a random subsample of the original dataset. Repeating this process results in an i.i.d.\ sample $D_1,\dots, D_n$ of persistence diagrams, each computed using a different random subsample of the data. 
One hopes that an ``expected persistence diagram'' would exist, and that it could serve as a proxy for the persistence diagram of the whole dataset. Chazal and Divol \cite{chazal2019density} show that such an expectation exists, but that it is a measure supported on $\w$ rather than a persistence diagram. Thus, the expected persistence \emph{measure} can be viewed as a topological descriptor of the original dataset.

More generally, consider a \emph{metric measure space}, i.e., a metric space $(X,d)$ equipped with a Borel probability measure $\mu$. Applying persistence directly to $X$ results in a topological descriptor of $X$ with no information about the measure $\mu$.
To overcome this problem, we can randomly sample $X$ with respect to $\mu$. Applying persistence to the sampled points clouds, we obtain i.i.d.\ persistence diagramds $D_1,\dots,D_n$, whose expected persistence measure can be estimated using \emph{kernel density estimation}. This pipeline is illustrated in Figure \ref{fig:pipeline}. The result is an expected persistence measure which encodes information about both the measure $\mu$ and the topology of $X$. Thus, the persistence measure is better suited for classification tasks involving metric measure spaces, and we explore its usage here.

In this paper, we propose treating persistence measures as topological descriptors for geometric objects. Our goal is to then use these descriptors as features in supervised learning tasks as follows. Let $\mathcal{M}$ be the space of \emph{Radon measures} on $\w$ equipped with the \emph{relative $\infty$-optimal transport distance} $\OT_\infty$. Fix  a continuous function $F:\mathcal{M}\to \R$, and a finite training sample $\mu_1,\dots, \mu_n\in \mathcal{M}$ together with the values $F(\mu_1),\dots, F(\mu_n)$. The goal in supervised learning is  to approximate the (e.g., classifier or regressor) $F$ using just the observed values $F(\mu_1),\dots, F(\mu_n)$. In order to construct such approximations, it is useful to find dense subsets of ``simpler'' functions whose form are amenable to optimization. A classical example of this is the Weierstrass approximation theorem which states that any continuous function defined on a closed interval $[a,b]$ can be uniformly approximated arbitrarily well by a polynomial function. The coefficients of an approximating polynomial can be computed by minimizing the sum of the squared errors between the polynomial and the ground truth evaluated on the training sample.

\begin{figure}[t]
	\begin{center}
		\includegraphics[trim={0cm 0.5cm 0cm 0cm},clip,width=1\linewidth]{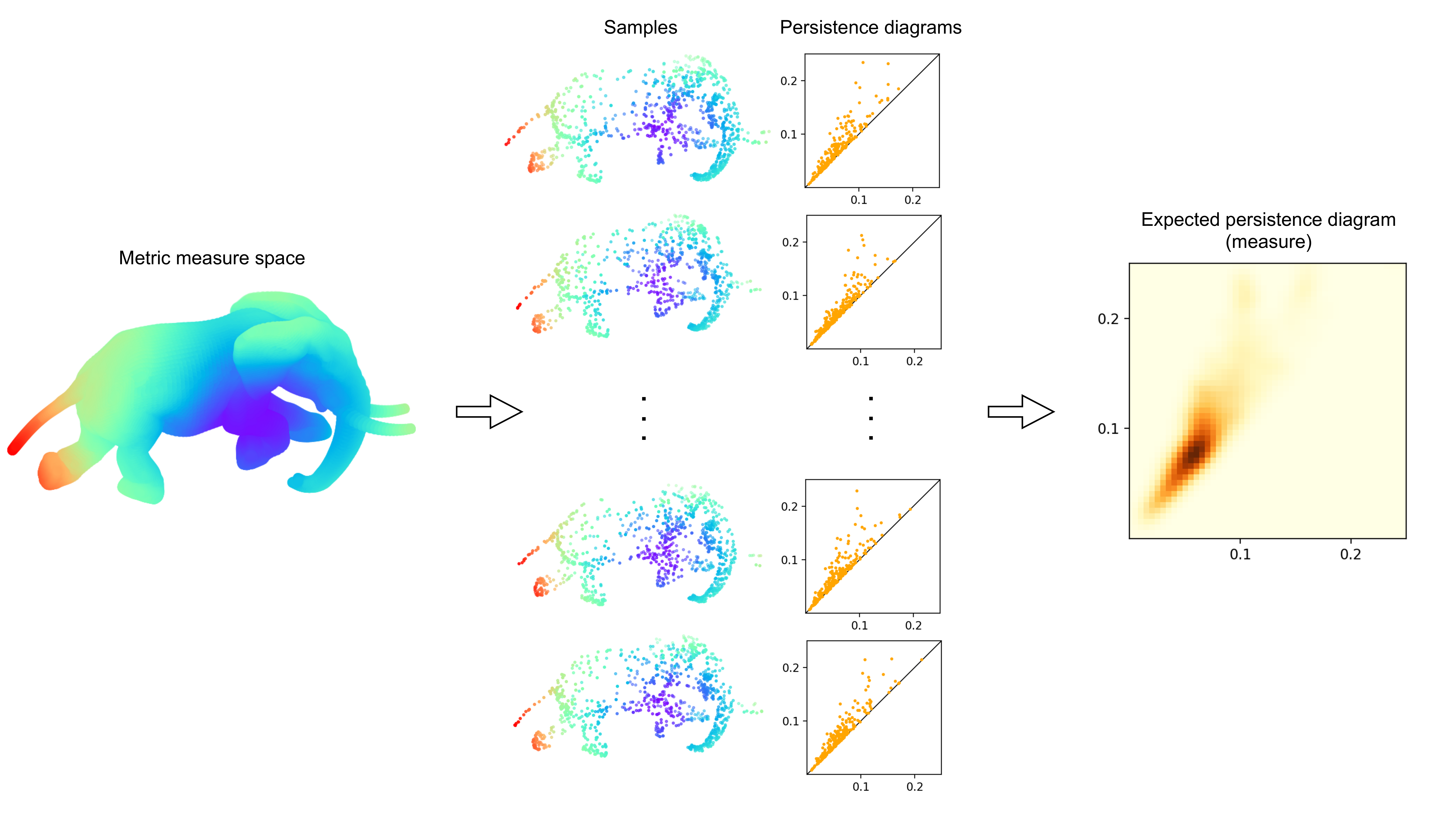}
		\caption{Our computational pipeline. From a metric measure space (color being density) we sample point clouds  and  compute their persistence diagrams. Next, we use the obtained persistence diagrams together with kernel density estimation to approximate the expected persistence measure. }\label{fig:pipeline}
	\end{center}
\end{figure}


\subsection{Contributions}
 Our first goal is to find easy-to-describe, compact-open dense subsets of the space of continuous functions on $(\mathcal{M}, \OT_\infty) $. Secondly, since our approach to approximating  a given   $F:\mathcal{M}\to \R$ will have theoretical guarantees only on compact sets, 
 it is important that we clarify the structure of said subsets so that practitioners can evaluate the application-specific suitability of our method.
 Lastly, we would like to instantiate the theory in a computational scheme for approximating continuous functions on measure spaces.

We achieve the first goal by showing that any continuous functions on $\mathcal{M}$ can be approximated arbitrarily well on compact sets by polynomial combinations of \emph{linear representations} (Theorem \ref{thm:FirstApproximation}). A linear representation is a real-valued function on $\mathcal{M}$ of the form $\mu\mapsto \int fd\mu$ for some continuous and compactly supported $f:\w\to \R$. We then strengthen this further to show that it suffices to fix a single function $f$, called a \emph{template function}, and consider the collection of all  its scalings and translations $f(a\mathbf{x} + \mathbf{b})$. We show that polynomial combinations of linear representations induced by this collection can be used to approximate continuous functions on compact sets as well (Theorem \ref{thm:function_approx}).

We then study the topology of  $(\mathcal{M},\OT_\infty)$ and its relevant compact subsets. 
Indeed, we first show that the subspace of \emph{off-diagonally finite} measures $\mathcal{M}_f^\infty$ (Definition \ref{def:off_d_finite}) is the completion of the space of finite measures (Theorem \ref{thm:completion_finite}), which are the ones expected to appear in practice. We then give necessary conditions for the relative compactness of subsets of $\mathcal{M}_f^\infty$. A complete characterization of compactness in $\mathcal{M}$  is still an open problem.

Lastly, we develop a computational pipeline for approximating continuous functions on measures space, and demonstrate its utility in classification tasks on four different datasets.

\subsection{Related Work}
The idea of using template functions in TDA to approximate continuous functions originated in \cite{perea2022approximating}. There, the authors prove density results for the space of continuous functions on persistence diagrams and give a characterization of the relatively compact subsets of diagram space. A complete characterization of relative compact subset of generalized persistence diagrams on metric pairs is given in \cite{bubenik2022topological}. In \cite{polanco2019adaptive}, the authors explore adaptive methods for feature selection using template functions. Expected persistence measure are introduced in \cite{chazal2019density}, where the authors give necessary conditions for these expectations to have densities with respect to Lebesgue measure. In \cite{divol2019understanding}, the authors study an extension of the bottleneck distance to the spaces of Radon measures using the theory of partial optimal transport introduced in \cite{figalli2010new}. The persistence image, introduced in \cite{adams2017persistenceimages}, is a particular example of the kernel density estimate of the expected persistence measure.

\section{Background}\label{sec:background}

\subsection{Function Spaces} The space of continuous functions between topological spaces $X,Y$ will be denoted $C(X,Y)$. When $Y = \R$ with its usual topology, we will sometimes denote $C(X,\R)$ more simply by $C(X)$. The subspace of $C(X)$ consisting of the compactly supported, real-valued, continuous functions will be denoted by $C_c(X)$.

There are several different topologies that we will use on functions spaces throughout this paper. Let $(K_n)$ be an increasing sequence of compact subsets of $\w$ such that $\bigcup_n K_n  = \w$. Then we have inclusions $C_c(K_n)\hookrightarrow C_c(\w)$ for all $n$. The \emph{strict inductive limit topology} on $C_c(\w)$ is the finest locally convex topology for which all of the inclusions $C_c(K_n)\hookrightarrow C_c(\w)$ are continuous. This topology is independent of the choice of the compact sets $K_n$. 

We will also make use of the \emph{compact-open topology} on functions spaces of the form $C(X)$. In the case of real-values functions on $X$, this is the topology generated by sets of the form $V(K,U) := \{f\in C(X) \ | \ f(K)\subset U\}$ for $K\subset X$ compact and $U\subset \R$ open.

\subsection{Measure Theory}\label{sec:measure_theory}

Throughout, $X = (X,d)$ denotes a locally compact metric space.

\begin{definition}\label{def:radon}
	A Borel measure $\mu$ on $X$ is a \emph{Radon measure} if it satisfies
	\begin{enumerate}
		\item For all open sets $U\subset X$, $\mu(U) = \sup\{\mu(K) \ | \ K\subset U \textup{ compact}\}$ (inner regularity).
		\item For all Borel sets $E\subset X$, $\mu(E) = \inf\{\mu(U) \ | \ U\subset E \textup{ open}\}$ (outer regularity).
		\item $\mu(K) <\infty$ for all $K\subset X$ compact (local finiteness).
	\end{enumerate}
	The set of Radon measures on a space $X$ will be denoted $\mathcal{M}(X)$.
\end{definition}

Throughout this section, we will provide useful information and results  about measures that will be used throughout. 
\begin{definition}
	Let $\mu$ be a Borel measure on $X$ and let $f:X\to \R$ be a measurable function. The \emph{essential supremum of $f$ (with respect to $\mu$)} is defined by
	\[\esssup_{\mu}(f) := \inf\{a\in \R \ | \ \mu(f^{-1}(a,\infty)) = 0)\}.\]
\end{definition}
In other words, the essential supremum of $f$ is the infimal $a\in \R$ for which $f(x) \leq a$ for $\mu$-almost every $x\in X$.

\begin{definition}
	Let $\mu$ be a Borel measure on $X$. The \emph{support of $\mu$} is the set
	\[ \spt(\mu) := \{x\in X \ | \ \mu(N_x) > 0 \textup{ for every neighborhood $N_x$ of $x$}\}.\]
\end{definition}
Note that $\spt(\mu)$ is closed. Indeed, if $x\in X\backslash \spt(\mu)$ and $U_x$ is an open neighborhood of $x$ with $\mu(U_x) = 0$ then $U_x \subset X\backslash \spt(\mu)$ so that $X\backslash \spt(\mu) = \bigcup_{x\in X\backslash \spt(\mu)} U_x$ is open.
\begin{lemma}\label{lem:cplmt_spt_null}
	Let $\mu$ be an inner regular Borel measure on $X$. If $A\subset X\backslash \spt(\mu)$ is measurable then $\mu(A) = 0$.
	\begin{proof}
		It suffices to show that $\mu(X\backslash \spt(\mu)) = 0$. Let $K\subset X\backslash \spt(\mu)$ compact. For each $x\in K$, we can find an open set $U_x$ containing $x$ with $\mu(U_x) = 0$. Since $K$ is compact, we can cover $K$ with finitely many such $U_x$ and hence $\mu(K) = 0$. Since $X\backslash \spt(\mu)$ is open, by inner regularity we have $\mu(X\backslash \spt(\mu)) = 0$.
	\end{proof}
\end{lemma}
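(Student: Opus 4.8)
The plan is to reduce the statement to the single claim that $\mu(X\setminus\spt(\mu)) = 0$. Since $A\subset X\setminus\spt(\mu)$ is measurable, monotonicity of $\mu$ gives $\mu(A)\leq \mu(X\setminus\spt(\mu))$, so it suffices to prove the latter vanishes. Recall from the discussion immediately preceding the lemma that $X\setminus\spt(\mu)$ is open, which is precisely what makes inner regularity applicable to it.

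By inner regularity,
\[ \mu(X\setminus\spt(\mu)) = \sup\{\mu(K) \mid K\subset X\setminus\spt(\mu)\ \text{compact}\}, \]
so I would fix an arbitrary compact $K\subset X\setminus\spt(\mu)$ and show $\mu(K)=0$. For each $x\in K$, the defining property of the support (together with $x\notin\spt(\mu)$) produces a neighborhood of $x$ of $\mu$-measure zero; shrinking it if necessary, I may take an \emph{open} neighborhood $U_x\ni x$ with $\mu(U_x)=0$. The family $\{U_x\}_{x\in K}$ is then an open cover of $K$, so compactness yields finitely many points $x_1,\dots,x_m$ with $K\subset\bigcup_{i=1}^m U_{x_i}$. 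Finite subadditivity gives $\mu(K)\leq\sum_{i=1}^m \mu(U_{x_i}) = 0$. Taking the supremum over all such $K$ and combining with the reduction above completes the proof.

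This argument is essentially routine, and I do not anticipate a genuine obstacle; the only points requiring a little care are that inner regularity should be invoked only after observing that $X\setminus\spt(\mu)$ is open, and that compact sets are closed, hence Borel, so that $\mu(K)$ is defined and monotonicity/subadditivity apply. It is worth remarking that neither local finiteness of $\mu$ nor local compactness of $X$ is actually needed here: inner regularity of the one open set $X\setminus\spt(\mu)$ together with finite subadditivity is all the argument uses.
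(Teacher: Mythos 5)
Your proof is correct and follows essentially the same route as the paper's: reduce to showing $\mu(X\setminus\spt(\mu))=0$, cover an arbitrary compact subset by finitely many open null neighborhoods, and invoke inner regularity of the open set $X\setminus\spt(\mu)$. The extra remarks on monotonicity and on which hypotheses are actually used are accurate but do not change the argument.
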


Below, under certain conditions, we relate the essential supremum of a measurable function and the support of a Borel measure.
\begin{proposition}
	Let $\mu$ be an inner regular Borel measure on $X$ and let $f:X\to\R$ be measurable. Then
	\[\esssup_\mu(f) = \sup\{f(x) \ | \ x\in \spt(\mu)\}.\]
	\begin{proof}
		Let $\alpha := \esssup_\mu(f)$ and $\beta:=  \sup\{f(x) \ | \ x\in \spt(\mu)\}$. First, we show that $\alpha \geq \beta$. Let $x\in \spt(\mu)$. Then for all $\eps > 0$, we have $x\in f^{-1}(f(x)-\eps, \infty)$. Since $x\in \spt(\mu)$, we have $\mu(f^{-1}(f(x)-\eps, \infty)) > 0$. It follows from the definition of $\alpha$ that $f(x) - \eps \leq \alpha$. Since $\eps > 0$ was arbitrary, we have $f(x) \leq \alpha$ for all $x\in \spt(\mu)$. Then $\alpha \geq \beta$ by definition of $\beta$.
		
		Next, we show that $\alpha \leq \beta$. Note that if $x\in f^{-1}(\beta,\infty)$ then $x\not\in \spt(\mu)$. Hence $f^{-1}(\beta,\infty)\subset X\backslash \spt(\mu)$. By Lemma \ref{lem:cplmt_spt_null}, we have $\mu(f^{-1}(\beta,\infty)) = 0$. Hence $\alpha \leq \beta$ by definition of $\alpha$.
	\end{proof}
\end{proposition}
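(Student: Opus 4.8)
The plan is to prove the two inequalities $\esssup_\mu(f)\le \sup\{f(x)\mid x\in\spt(\mu)\}$ and $\esssup_\mu(f)\ge \sup\{f(x)\mid x\in\spt(\mu)\}$ separately. Write $\alpha := \esssup_\mu(f)$ and $\beta := \sup\{f(x)\mid x\in \spt(\mu)\}$; the degenerate cases $\spt(\mu)=\emptyset$ (equivalently $\mu=0$, by Lemma \ref{lem:cplmt_spt_null}) and $\beta=+\infty$ are trivial, so I would assume $\spt(\mu)\neq\emptyset$ and $\beta<\infty$.

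For $\alpha\le\beta$, the idea is that the superlevel set $f^{-1}(\beta,\infty)$ misses the support entirely: if $f(x)>\beta$ then, by definition of $\beta$ as a supremum over $\spt(\mu)$, we cannot have $x\in\spt(\mu)$, so $f^{-1}(\beta,\infty)\subseteq X\setminus\spt(\mu)$. This set is Borel, so Lemma \ref{lem:cplmt_spt_null} (this is where inner regularity enters) gives $\mu(f^{-1}(\beta,\infty))=0$. By definition of the essential supremum, $\beta$ then belongs to the set over which $\alpha$ is an infimum, hence $\alpha\le\beta$. This half works for any measurable $f$.

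For $\alpha\ge\beta$, I would fix $x\in\spt(\mu)$, show $f(x)\le\alpha$, and then take the supremum over $x$. Fix $\eps>0$; then $x\in f^{-1}(f(x)-\eps,\infty)$, and the crucial point is that this superlevel set is a \emph{neighborhood} of $x$ — which holds because $f$ has open superlevel sets (lower semicontinuity; in particular true when $f$ is continuous, as for the template functions of interest). Since $x\in\spt(\mu)$, every neighborhood of $x$ has positive $\mu$-measure, so $\mu(f^{-1}(f(x)-\eps,\infty))>0$, which forces $f(x)-\eps\le\alpha$ by definition of $\alpha$. Letting $\eps\downarrow 0$ gives $f(x)\le\alpha$, and taking the supremum over $x\in\spt(\mu)$ yields $\beta\le\alpha$.

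The step I expect to be the real obstacle is precisely this openness of superlevel sets in the $\alpha\ge\beta$ direction: without some semicontinuity of $f$ the identity can fail (for instance, against Lebesgue measure on $\R$, which has full support, the indicator of a single point has $\alpha=0$ but $\beta=1$), so the argument must be run with $f$ assumed lower semicontinuous or continuous rather than merely measurable. The reverse inequality $\alpha\le\beta$, by contrast, is the routine half and rests entirely on Lemma \ref{lem:cplmt_spt_null}.
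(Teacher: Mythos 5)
Your proof follows the same route as the paper's: the inequality $\esssup_\mu(f)\le\sup_{\spt(\mu)}f$ comes from applying Lemma \ref{lem:cplmt_spt_null} to $f^{-1}(\beta,\infty)\subseteq X\setminus\spt(\mu)$, and the reverse inequality from testing the superlevel sets $f^{-1}(f(x)-\eps,\infty)$ at points $x$ of the support. Where you add value is in flagging the hypothesis needed for that second half, and you are right: the paper's proof asserts ``since $x\in \spt(\mu)$, we have $\mu(f^{-1}(f(x)-\eps,\infty))>0$'' with no justification, and this inference is valid only if $f^{-1}(f(x)-\eps,\infty)$ is a neighborhood of $x$, i.e., only if $f$ is lower semicontinuous there. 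For merely measurable $f$, as the proposition is stated, the identity is false; your counterexample is correct ($f=1_{\{0\}}$ against Lebesgue measure on $\R$, which has full support, gives $\esssup_\mu(f)=0$ while $\sup_{\spt(\mu)}f=1$, so the $\alpha\ge\beta$ direction fails). The statement should therefore assume $f$ lower semicontinuous (or continuous). This is harmless for the rest of the paper, since the only place the proposition is invoked is to identify $C_\infty(\pi)=\esssup_\pi(d)$ with $\sup\{d(\mathbf{x},\mathbf{y})\mid(\mathbf{x},\mathbf{y})\in\spt(\pi)\}$, where $f=d$ is the continuous transport cost; but as written the paper's proof has exactly the gap you predicted.
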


We will use the following notion of convergence throughout.

\begin{definition}
A sequence $(\mu_n)$ in $\mathcal{M}(X)$ is said to converge \emph{vaguely} to $\mu$ in $\mathcal{M}(X)$, denoted $\mu_n\xrightarrow{v} \mu$, if $\int f d\mu_n\to \int fd\mu$ for any $f \in C_c(\mathbb{W})$. 
\end{definition}

\subsection{Partial Optimal Transport} In this section, we define partial optimal transport and the corresponding $\infty$-partial optimal transport distance. Partial optimal transport was originally introduced in \cite{figalli2010new} and studied in \cite{divol2019understanding} in the context of persistence.

For simplicity, we denote $\mathcal{M}(\mathbb{W})$, the set of Radon measure on $\mathbb{W}$, by $\mathcal{M}$.
Fix $1\leq q\leq \infty$ and let $d$ be the pseudometric on $\ow$ given by $d(\mathbf{x},\mathbf{y}) := \min(\|\mathbf{x}-\mathbf{y}\|_q, \|\mathbf{x}-\Delta\|_q + \|\mathbf{y}-\Delta\|_q)$, where $\|\mathbf{x} - \Delta\|_q = \inf_{\mathbf{z}\in \Delta} \|\mathbf{x}-\mathbf{z}\|_q$. Note that $d(\mathbf{x},\Delta) = \|\mathbf{x}-\Delta\|_q$ and, if $\mathbf{y}\in \Delta$, then $d(\mathbf{x},\mathbf{y}) = \|\mathbf{x}-\Delta\|_q$.
\begin{definition}
	Given a Radon measure $\mu\in \mathcal{M}$, define the \emph{$\infty$-persistence} of $\mu$ by
	\[ \pers_\infty(\mu) := \sup\{d(\mathbf{x},\Delta) \ | \ \mathbf{x}\in \spt(\mu)\} = \sup\{\|\mathbf{x}- \Delta\|_q \ | \ \mathbf{x}\in \spt(\mu)\}.\]
	and define
	\[ \mathcal{M}^\infty := \{\mu \in \mathcal{M} \ | \ \pers_\infty(\mu) < \infty\}.\]
\end{definition}
%

\begin{definition}
	Let $\mu,\nu\in \mathcal{M}^\infty$. A \emph{coupling} (or \emph{admissible transport plan}) between $\mu$ and $\nu$ is a Radon measure $\pi\in \mathcal{M}(\ow \times \ow)$ such that, for all Borel sets $A,B\in \w$,
	\[\pi(A\times \ow) = \mu(A) \quad \textup{and} \quad \pi(\ow\times B) = \nu(B).\]
	The set of all couplings between $\mu$ and $\nu$ is denoted $\Adm(\mu,\nu)$. The \emph{$\infty$-cost} of $\pi \in \Adm(\mu,\nu)$ is defined by
	\[ C_\infty(\pi) := \esssup_\pi(d) = \sup\{d(\mathbf{x},\mathbf{y}) \ | \ (\mathbf{x},\mathbf{y})\in \spt(\pi)\}.\]The \emph{partial $\infty$-optimal transport distance} is defined by
	\[ \OT_\infty(\mu,\nu) := \inf_{\pi\in \Adm(\mu,\nu)}C_\infty(\pi).\]
\end{definition}


We will use the following facts throughout about $\OT_\infty$ throughout.
\begin{theorem}[{\cite[Proposition 3.10]{divol2019understanding}}]\label{thm:completeness}
	The space $(\mathcal{M}^\infty, \textup{OT}_\infty)$ is complete.
\end{theorem}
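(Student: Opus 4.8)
The plan is to follow the standard two-step strategy for completeness of optimal-transport-type spaces. Given a Cauchy sequence $(\mu_n)$ in $(\mathcal{M}^\infty,\OT_\infty)$, I would first extract a subsequence converging \emph{vaguely} to a limit $\mu$, check that $\mu\in\mathcal{M}^\infty$, and then upgrade the vague convergence of that subsequence to convergence in $\OT_\infty$; since a Cauchy sequence with a convergent subsequence converges, this finishes the proof.

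\emph{Extracting a vague limit.} After passing to a subsequence I may assume $\OT_\infty(\mu_n,\mu_{n+1})<2^{-n}$. Since $\pers_\infty$ is $1$-Lipschitz for $\OT_\infty$ --- indeed $\pers_\infty(\mu)=\OT_\infty(\mu,0)$ for the zero measure $0$, as the only admissible plan between $\mu$ and $0$ sends all the mass of $\mu$ to $\Delta$ --- the sequence $(\pers_\infty(\mu_n))$ is bounded, say by $M$. Next I would establish a uniform local mass bound: if $\OT_\infty(\mu,\nu)<\delta$ and $K\subset\w$ is compact with $\delta<\inf_{\mathbf{x}\in K}\|\mathbf{x}-\Delta\|_q$, then a plan of cost $<\delta$ can move the mass lying over $K$ neither to $\Delta$ nor farther than $\delta$ within $\w$, so $\mu(K)\le\nu(K^{+\delta})$, with $K^{+\delta}$ the $\delta$-neighbourhood of $K$, which is relatively compact in $\w$. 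Applied along the subsequence this gives $\sup_n\mu_n(K)<\infty$ for every compact $K\subset\w$, so by the usual sequential vague compactness of locally uniformly bounded families of Radon measures on the locally compact, $\sigma$-compact space $\w$, a further subsequence satisfies $\mu_{n_k}\xrightarrow{v}\mu$ for some $\mu\in\mathcal{M}$. Finally $\mu\in\mathcal{M}^\infty$: were there $\mathbf{x}_0\in\spt\mu$ with $\|\mathbf{x}_0-\Delta\|_q>M$, then choosing a nonnegative $f\in C_c(\w)$ with $\int f\,d\mu>0$ supported near $\mathbf{x}_0$ on points of persistence $>M$ would force $\mu_{n_k}(\supp f)>0$ for large $k$, hence $\spt(\mu_{n_k})\cap\supp f\neq\emptyset$ by inner regularity and \Cref{lem:cplmt_spt_null}, contradicting $\pers_\infty(\mu_{n_k})\le M$.

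\emph{Lower semicontinuity and conclusion.} It remains to show that $\OT_\infty(\mu,\cdot)$ is vaguely lower semicontinuous on $\mathcal{M}^\infty$, i.e. $\nu_m\xrightarrow{v}\nu$ implies $\OT_\infty(\mu,\nu)\le\liminf_m\OT_\infty(\mu,\nu_m)$; granting this, $\OT_\infty(\mu_n,\mu)\le\liminf_k\OT_\infty(\mu_n,\mu_{n_k})$ is arbitrarily small for large $n$ because $(\mu_n)$ is Cauchy, so $\mu_n\to\mu$. To prove the semicontinuity, choose near-optimal plans $\pi_m\in\Adm(\mu,\nu_m)$ with $C_\infty(\pi_m)\le\OT_\infty(\mu,\nu_m)+1/m=:c_m$. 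On compact rectangles $K_1\times K_2\subset\w\times\w$ one has $\pi_m(K_1\times K_2)\le\mu(K_1)$ uniformly in $m$, so a subsequence of the restrictions $\pi_m|_{\w\times\w}$ converges vaguely to some $\widetilde\pi\in\mathcal{M}(\w\times\w)$, whose marginals are dominated by $\mu$ and $\nu$ respectively. The cost constraint forces the mass of $\mu$ and $\nu$ not recorded by these marginals to have escaped only from within $\|\cdot-\Delta\|_q$-distance $\limsup_m c_m$ of $\Delta$; reinstating it as transport to and from $\Delta$ produces a genuine $\pi\in\Adm(\mu,\nu)$ with $C_\infty(\pi)\le\limsup_m c_m=\liminf_m\OT_\infty(\mu,\nu_m)$, the bound on the part inherited from $\widetilde\pi$ following because every point of $\spt(\widetilde\pi)$ is a limit of points of $\spt(\pi_m)$ and $d$ is continuous.

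The main obstacle is this last step. The plans $\pi_m$ live on the noncompact space $\ow\times\ow$, where mass may escape to infinity and simultaneously be absorbed into or emitted from the collapsed diagonal; the real work is to show that the vague limit on $\w\times\w$, once the diagonal contributions have been correctly reinstated, is an admissible plan whose $\infty$-cost still does not exceed $\liminf_m\OT_\infty(\mu,\nu_m)$. Identifying exactly which mass is permitted to escape, and verifying that the reinstated diagonal transport costs at most $\limsup_m c_m$, is precisely where the cost bound $C_\infty(\pi_m)<c_m$ must be used with care.
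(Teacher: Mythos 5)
The paper contains no proof of this statement to compare against: it is imported verbatim from Divol and Lacombe \cite{divol2019understanding}. Your strategy --- extract a vaguely convergent subsequence from the Cauchy sequence, check the limit lies in $\mathcal{M}^\infty$, then upgrade vague convergence to $\OT_\infty$-convergence via lower semicontinuity of the cost --- is the standard route and, as far as I can tell, the same one taken in the cited reference. The first half of your argument is complete and correct: the identity $\pers_\infty(\mu)=\OT_\infty(\mu,0)$, the local mass bound $\mu(K)\le\nu(K^{+\delta})$ for compact $K$ at distance more than $\delta$ from $\Delta$ (this is exactly the mechanism of Lemma~\ref{lem:U_eps} in the paper), the vague compactness on the locally compact $\sigma$-compact space $\w$, and the support argument bounding $\pers_\infty(\mu)$ all go through.

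The one genuine gap is the one you flag yourself: the lower semicontinuity of $\OT_\infty(\mu,\cdot)$ along vaguely convergent sequences is asserted and motivated but not proved, and it is the entire content of the theorem; as written the proposal is an outline rather than a proof. The sketch is, however, completable, and here is the observation that closes it. Extract $\widetilde\pi$ once and for all as the vague limit on $\w\times\w$ of the restricted plans. For any $c>\limsup_m c_m$ and any compact $K\subset\w$ with $\inf_{\mathbf{x}\in K}\|\mathbf{x}-\Delta\|_q>c$, the cost bound forces $\spt(\pi_m)\cap(\ow\times K)\subset K^{+c}\times K$ and $\spt(\pi_m)\cap(K\times\ow)\subset K\times K^{+c}$ for large $m$, both relatively compact in $\w\times\w$; testing against products $f(\mathbf{x})g(\mathbf{y})$ with a cutoff $f\equiv 1$ on $K^{+c}$ then identifies the two marginals of $\widetilde\pi$ with $\mu$ and $\nu$ on $\{\mathbf{x} \ | \ \|\mathbf{x}-\Delta\|_q>c\}$ (for the second marginal this is where $\nu_m\xrightarrow{v}\nu$ enters). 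Consequently the deficiencies $\mu-(\textup{first marginal})$ and $\nu-(\textup{second marginal})$ are nonnegative measures supported on $\{\|\mathbf{x}-\Delta\|_q\le \limsup_m c_m\}$, and adding their pushforwards under $\mathbf{x}\mapsto(\mathbf{x},p(\mathbf{x}))$ and $\mathbf{x}\mapsto(p(\mathbf{x}),\mathbf{x})$ (with $p$ the orthogonal projection onto $\Delta$) to $\widetilde\pi$ yields an admissible plan of cost at most $\limsup_m c_m$, which equals $\liminf_m\OT_\infty(\mu,\nu_m)$ after first passing to a subsequence realizing the liminf. With that lemma supplied, your concluding step is correct.
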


\begin{proposition}[{\cite[Proposition 3.11]{divol2019understanding}}]\label{prop:OT_conv_implies_vague_conv_and_pers_conv}
	Let $\mu, \mu_1, \mu_2, \ldots \in \mathcal{M}^\infty$. If $\OT_{\infty}(\mu_n, \mu) \to 0$, then $(\mu_n) \xrightarrow{v} \mu$ and $\pers_\infty(\mu_n) \to \pers_\infty(\mu)$.
\end{proposition}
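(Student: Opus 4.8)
The plan is to establish the two conclusions separately, both times working with near-optimal couplings. For each $n$, fix $\pi_n\in\Adm(\mu_n,\mu)$ with $r_n:=C_\infty(\pi_n)\le\OT_\infty(\mu_n,\mu)+1/n$, so that $r_n\to 0$; applying the proposition above relating essential supremum and support to the Radon measure $\pi_n$ on the locally compact space $\ow\times\ow$ shows $d(\mathbf{x},\mathbf{y})\le r_n$ for $\pi_n$-almost every $(\mathbf{x},\mathbf{y})$.

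\emph{Vague convergence.} Fix $f\in C_c(\w)$, set $K:=\supp f$ and $\delta:=\inf_{\mathbf{x}\in K}\|\mathbf{x}-\Delta\|_q>0$. Extend $f$ to $\tilde f\colon\ow\to\R$ by $\tilde f\equiv 0$ on $\Delta$; since $K$ is bounded away from $\Delta$, $\tilde f$ is continuous, bounded, supported in $K$, and---crucially---uniformly continuous for the pseudometric $d$ (if $d(\mathbf{x},\mathbf{y})<\delta$, then either $\|\mathbf{x}-\mathbf{y}\|_q=d(\mathbf{x},\mathbf{y})$, or $\|\mathbf{x}-\Delta\|_q+\|\mathbf{y}-\Delta\|_q<\delta$ forces $\mathbf{x},\mathbf{y}\notin K$ and hence $\tilde f(\mathbf{x})=\tilde f(\mathbf{y})=0$). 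Let $\omega$ be a modulus of continuity of $\tilde f$ for $d$. Because $\tilde f$ vanishes on $\Delta$, the marginal conditions give
\[
\int f\,d\mu_n-\int f\,d\mu=\int_{\ow\times\ow}\bigl(\tilde f(\mathbf{x})-\tilde f(\mathbf{y})\bigr)\,d\pi_n,
\]
whose integrand is supported on $\{\mathbf{x}\in K\}\cup\{\mathbf{y}\in K\}$ and, on the $\pi_n$-full-measure set $\{d(\mathbf{x},\mathbf{y})\le r_n\}$, is bounded by $\omega(r_n)$ in absolute value. It then remains to bound $\pi_n(\{\mathbf{x}\in K\}\cup\{\mathbf{y}\in K\})\le\mu_n(K)+\mu(K)$ uniformly in $n$. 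For $n$ with $r_n<\delta/2$, any $(\mathbf{x},\mathbf{y})$ with $\mathbf{x}\in K$ and $d(\mathbf{x},\mathbf{y})\le r_n$ must satisfy $\|\mathbf{x}-\mathbf{y}\|_q=d(\mathbf{x},\mathbf{y})\le r_n$ (the diagonal branch of the minimum is ruled out by $\|\mathbf{x}-\Delta\|_q\ge\delta$), so $\mathbf{y}$ lies in the fixed compact set $K':=\{\mathbf{z}\in\w:\inf_{\mathbf{w}\in K}\|\mathbf{z}-\mathbf{w}\|_q\le\delta/2\}$; hence $\mu_n(K)=\pi_n(K\times\ow)\le\pi_n(\ow\times K')=\mu(K')<\infty$. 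Therefore $\bigl|\int f\,d\mu_n-\int f\,d\mu\bigr|\le\omega(r_n)\bigl(\mu(K')+\mu(K)\bigr)\to 0$.

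\emph{Persistence convergence.} Here I would prove the stronger statement that $\pers_\infty$ is $1$-Lipschitz on $\mathcal{M}^\infty$, i.e., $|\pers_\infty(\mu)-\pers_\infty(\nu)|\le\OT_\infty(\mu,\nu)$. By the proposition above, $\pers_\infty(\mu)=\esssup_\mu\bigl(d(\cdot,\Delta)\bigr)$, and likewise for $\nu$. Given any coupling $\pi$ with $c:=C_\infty(\pi)<\infty$, for $\pi$-almost every $(\mathbf{x},\mathbf{y})$ we have $d(\mathbf{y},\Delta)\le\pers_\infty(\nu)$ (since the $\nu$-null set $\{d(\cdot,\Delta)>\pers_\infty(\nu)\}$ pulls back to a $\pi$-null set under the second marginal) and $d(\mathbf{x},\mathbf{y})\le c$; the triangle inequality for the pseudometric $d$ then yields $d(\mathbf{x},\Delta)\le\pers_\infty(\nu)+c$ $\pi$-a.e., hence $\mu$-a.e. via the first marginal, so $\pers_\infty(\mu)\le\pers_\infty(\nu)+c$. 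Symmetrizing and taking the infimum over $\pi$ gives the Lipschitz bound, from which $\OT_\infty(\mu_n,\mu)\to 0$ immediately forces $\pers_\infty(\mu_n)\to\pers_\infty(\mu)$.

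The main obstacle is the uniform mass estimate in the vague-convergence step: since Radon measures on $\w$ may have infinite total mass, $\mu_n(\supp f)$ cannot be controlled by a total-mass bound, and one must use the geometric fact that once $r_n<\tfrac12\inf_{\mathbf{x}\in\supp f}\|\mathbf{x}-\Delta\|_q$ no mass lying over $\supp f$ can be routed to the diagonal, which pins $\mu_n(\supp f)$ below $\mu$ of a fixed compact enlargement of $\supp f$. A minor point to keep in mind is that the marginals of $\pi_n$ are a priori measures on $\ow$ that may charge $\Delta$; this is harmless precisely because $\tilde f$ (and $d(\cdot,\Delta)$, which vanishes only on $\Delta$) are handled there.
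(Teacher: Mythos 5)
The paper does not prove this proposition; it is imported verbatim from Divol--Lacombe \cite[Proposition 3.11]{divol2019understanding}, so there is no internal proof to compare against. Your argument is a correct, self-contained proof, and its two nonstandard ingredients are exactly the right ones. For vague convergence, the crux is indeed the uniform mass bound: since $\mu_n(\w)$ may be infinite, you cannot bound $\pi_n(\{\mathbf{x}\in\supp f\}\cup\{\mathbf{y}\in\supp f\})$ by total masses, and your observation that once $r_n<\tfrac{\delta}{2}$ the diagonal branch of $d$ is unavailable for points over $\supp f$ --- forcing $\mu_n(\supp f)=\pi_n(\supp f\times\ow)\le\mu(K')$ for a fixed compact thickening $K'\subset\w$ --- closes that gap cleanly; the extension of $f$ by zero to $\Delta$ and its uniform $d$-continuity are also handled correctly, and the marginal identity is legitimate because the coupling conditions are imposed on Borel subsets of $\w$ while $\tilde f$ vanishes on $\Delta$. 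For the second claim you prove the stronger statement that $\pers_\infty$ is $1$-Lipschitz for $\OT_\infty$, using $\pers_\infty(\mu)=\esssup_\mu d(\cdot,\Delta)$ (the paper's proposition relating essential suprema to supports, valid since Radon measures are inner regular) together with the triangle inequality for the pseudometric $d$ with $\Delta$ collapsed to a point; this is sound and gives more than the mere convergence asserted. The only caveat worth recording is that the a.e.\ bound $d\le C_\infty(\pi_n)$ follows either from the support characterization of $C_\infty$ together with Lemma \ref{lem:cplmt_spt_null}, or from writing $\{d>C_\infty(\pi_n)\}$ as a countable union of null sets --- both routes are available here, and you implicitly use the former.
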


\subsection{Expected Persistence Measures} Let $\mathbb{X}$ be a random point cloud. For example, $\mathbb{X}$ might be obtained by randomly sampling a fixed number of points from a manifold. We can compute the persistence diagram of a filtration $\mathcal{K}(\mathbb{X},r)$, where $\mathcal{K}(\mathbb{X},r)$ might be either the Vietoris-Rips or the \v{C}ech filtration. The expectation of this random persistence diagram will not itself be a persistence diagram, but  a Radon measure supported on $\w$. In \cite{chazal2019density}, Chazal and Divol prove that, under certain conditions, the expected persistence measure has a density with respect to Lebesgue measure.

\begin{theorem}[{\cite[Theorem 3.3]{chazal2019density}}]\label{thm:chazal_density} Fix $n\geq 1$. Assume that $M$ is a real analytic compact $d$-dimensional connected submanifold and that $X$ is a random variable on $M^n$ having a density with respect to the Hausdorff measure. Then for $s\geq 1$, the expected persistence measure $\mathbb{E}[D_s(\mathcal{K}(\mathbb{X}))]$ has a density with respect to the Lebesgue measure on $\w$. Moreover, $\mathbb{E}[D_s(\mathcal{K}(\mathbb{X}))]$ has a density with respect to Lebesgue measure on the vertical line $\{0\} \times [0,\infty)$.
\end{theorem}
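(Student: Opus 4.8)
The plan is to prove that the finite Radon measure $\mathbb{E}[D_s(\mathcal{K}(\mathbb{X}))]$ (its total mass is at most $\binom{n}{s+1}$, the number of $s$-simplices on $n$ vertices) decomposes as $\mu_1+\mu_2$, where $\mu_1$ is absolutely continuous with respect to $2$-dimensional Lebesgue measure on $\w$ and $\mu_2$ is supported on the vertical line $\{0\}\times[0,\infty)$ and is absolutely continuous with respect to $1$-dimensional Lebesgue measure there; this is exactly the assertion of the theorem (and for $s\ge1$ one will in fact find $\mu_2=0$). Writing $\rho$ for the density of $X$ with respect to the Hausdorff measure $\mathcal{H}=\mathcal{H}^{nd}$ on $M^n$, and using that each sampled diagram $D_s(\mathcal{K}(\mathbf{x}))$ is a finite sum $\sum_k\delta_{p_k(\mathbf{x})}$ of Dirac masses, one has $\mathbb{E}[D_s(\mathcal{K}(\mathbb{X}))](A)=\int_{M^n}\#\{k:p_k(\mathbf{x})\in A\}\,\rho(\mathbf{x})\,d\mathcal{H}(\mathbf{x})$ for Borel $A\subset\w$; so it suffices to write this measure, up to an $\mathcal{H}$-null set of configurations, as a finite sum of pushforwards of $\rho\,\mathcal{H}$ along real analytic $\w$-valued maps, and then to analyze one such pushforward.

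First I would stratify $M^n$. The \v{C}ech (resp.\ Vietoris-Rips) filtration value $t_\sigma$ of a simplex $\sigma$ is the circumradius of the smallest enclosing ball of $\{x_i:i\in\sigma\}$ (resp.\ $\tfrac12\max_{i,j\in\sigma}\|x_i-x_j\|$); since $M$ is real analytic, $t_\sigma$ is subanalytic on $M^n$ and real analytic on each of the finitely many pieces on which the support set of the enclosing ball (resp.\ the realizing pair) is fixed. Discarding the $\mathcal{H}$-null set of configurations with coincident points and refining by the combinatorial type of the filtered complex $\mathcal{K}(\mathbf{x})$ and of these minimizers, one obtains finitely many open subanalytic sets $U_1,\dots,U_m\subset M^n$ covering $\mathcal{H}$-almost all of $M^n$, on each of which the persistence module is a fixed list of intervals. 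Hence
\[
\mathbb{E}[D_s(\mathcal{K}(\mathbb{X}))]=\sum_{j=1}^m\sum_{k=1}^{N_j}(\Phi_{j,k})_*\big(\rho\,\mathcal{H}|_{U_j}\big),\qquad\Phi_{j,k}=\big(t_{\sigma_k^{(j)}},\,t_{\tau_k^{(j)}}\big):U_j\to\w,
\]
with each $\Phi_{j,k}$ real analytic and with birth simplex $\sigma_k^{(j)}\neq\tau_k^{(j)}$ the death simplex.

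Then I would establish a dichotomy for a real analytic $\Phi=(b,d):U\to\w$ on a connected real analytic manifold $U$ (here $\dim U=nd$, and we may assume $nd\ge2$ since otherwise there is no nontrivial persistence), with $\mu$ on $U$ absolutely continuous with respect to Riemannian volume. If $D\Phi$ has rank $2$ somewhere, then $\{\operatorname{rank} D\Phi<2\}$ is a proper analytic subset of $U$, hence volume-null and $\mu$-null, while on its complement $\Phi$ is a submersion and therefore (being locally a coordinate projection, so that preimages of Lebesgue-null sets are volume-null by Fubini) pushes $\mu$ to a measure absolutely continuous with respect to $2$-dimensional Lebesgue measure; hence so does $\Phi$. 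Otherwise $\operatorname{rank} D\Phi\le1$ throughout, $\Phi(U)$ is Lebesgue-null in $\R^2$ by Sard's theorem, and I claim this forces $b\equiv0$. Indeed, if the birth simplex $\sigma_k^{(j)}$ had at least two vertices then, since $t_{\sigma_k^{(j)}}<t_{\tau_k^{(j)}}$ on $U$ (births precede deaths), the minimizer of $\sigma_k^{(j)}$ must differ from that of $\tau_k^{(j)}$, and one exhibits perturbations of $\mathbf{x}$ within $U$ along which $t_{\sigma_k^{(j)}}$ and $t_{\tau_k^{(j)}}$ have linearly independent gradients, contradicting $\operatorname{rank} D\Phi\le1$. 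Thus $\sigma_k^{(j)}$ is a vertex, $b\equiv0$, $\Phi(U)\subset\{0\}\times[0,\infty)$, and the one-dimensional form of the same argument applied to the nonconstant real analytic function $d=t_{\tau_k^{(j)}}$ shows $\Phi_*\mu$ is absolutely continuous with respect to $1$-dimensional Lebesgue measure on the vertical line. Summing over $j$ and $k$ yields the desired decomposition.

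The main obstacle lies in the first two steps: proving that $M^n$ admits a \emph{finite} subanalytic stratification with constant persistence combinatorics and real analytic filtration functions on the strata — here real analyticity of $M$, subanalyticity of the smallest-enclosing-ball circumradius, and o-minimal finiteness enter — and, within the branch where $\operatorname{rank} D\Phi\le1$, the geometric argument that two distinct simplices each having at least two vertices carry analytically independent \v{C}ech/Vietoris-Rips filtration values. Once this analytic scaffolding is in place, the absolute-continuity bookkeeping for the pushforwards in the dichotomy is routine.
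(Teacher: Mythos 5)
The paper does not prove this statement; it is quoted from Chazal and Divol \cite{chazal2019density} as background, so there is no internal proof to compare against. Your proposal essentially reconstructs the argument of that source: decompose $\mathbb{E}[D_s(\mathcal{K}(\mathbb{X}))]$ as a finite sum of pushforwards of $\rho\,\mathcal{H}$ along real analytic maps $(t_\sigma,t_\tau)$ over a subanalytic stratification of $M^n$ on which the persistence pairing is constant, then run the rank dichotomy on each piece, with the decisive geometric input being that the birth and death simplices have distinct realizing sets (since $t_\sigma<t_\tau$) so that perturbing a vertex in one but not the other forces the differentials to be independent off a null set. The only place your sketch stops short is that perturbation lemma, which is precisely the technical heart of the original proof; your statement of it is correct and the indicated proof idea is the standard one, so I see no genuine gap, only details left to the reader where you say they are left.
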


We will utilize the fact that expected persistence measures have a density in Section \ref{sec:KDE} and in our experiments, where we apply kernel density estimation to approximate an expected persistence measure from a sample of persistence diagrams.

\section{Learning Continuous Functions on Measure Spaces} In this section, we give theoretical justification for our learning scheme. We use a version of the Stone-Weierstrass theorem to show that continuous functions on $C(\mathcal{M},\mathbb{R})$ can be approximated arbitrarily well by polynomial combinations of features computed using a \emph{template function} --- a compactly supported continuous function on $\w$.

\subsection{The Stone-Weierstrass Theorem for the Compact-open Topology} The main tool used throughout this section is the following  version of the Stone-Weierstrass theorem.

\begin{theorem}[\cite{kelley1955generaltopology}] Let $X$ be a metric space. If $A$ is a subalgebra of $C(X)$ that separates points in $X$ and contains the constant functions, then $A$ is dense in $C(X)$ with respect to the compact-open topology.
\end{theorem}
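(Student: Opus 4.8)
The plan is to deduce this from the classical Stone--Weierstrass theorem for compact Hausdorff spaces (with the uniform norm) by restricting functions to compact subsets of $X$. Recall that a subbasis for the compact-open topology on $C(X) = C(X,\R)$ is given by the sets $V(K,U) = \{f \in C(X) : f(K) \subseteq U\}$ with $K \subseteq X$ compact and $U \subseteq \R$ open, so a general nonempty basic open set has the form $G = \bigcap_{i=1}^n V(K_i, U_i)$. To show that $A$ is dense it suffices to show that every such nonempty $G$ meets $A$.

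So I would fix $f_0 \in G$. Discarding the indices with $U_i = \R$ (which impose no constraint; if all are discarded, or $n=0$, then $G = C(X)$, which meets $A$ since $A$ contains the constants), we may assume each $\R \setminus U_i$ is nonempty. Each $f_0(K_i)$ is a compact subset of the open set $U_i$, hence $\delta_i := \mathrm{dist}(f_0(K_i), \R\setminus U_i) > 0$; set $\delta := \min_i \delta_i > 0$ and $L := \bigcup_{i=1}^n K_i$, a compact subset of $X$. Being a closed subset of a metric space, $L$ is a compact metric space, in particular compact Hausdorff. Consider the restriction homomorphism $C(X) \to C(L)$, $f \mapsto f|_L$, and let $A|_L$ be the image of $A$; it is a subalgebra of $C(L)$, it contains the constants, and it separates points of $L$ because $A$ separates points of $X \supseteq L$. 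By the classical Stone--Weierstrass theorem, $A|_L$ is dense in $(C(L), \|\cdot\|_\infty)$, so there is $g \in A$ with $\sup_{x \in L}|g(x) - f_0(x)| < \delta$. Then for each $i$ and each $x \in K_i \subseteq L$ we have $|g(x) - f_0(x)| < \delta \le \delta_i \le \mathrm{dist}(f_0(x), \R\setminus U_i)$, and since $f_0(x) \in U_i$ this forces $g(x) \in U_i$. Hence $g \in V(K_i, U_i)$ for every $i$, i.e.\ $g \in G \cap A$, as desired.

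I do not expect a genuine obstacle here; the argument is essentially bookkeeping around the classical theorem. The two points that require (minor) care are verifying that the hypotheses on $A$ transfer to the restricted algebra $A|_L$ --- immediate, since restriction is an algebra homomorphism, separation on $X$ restricts to separation on the subset $L$, and the constants restrict to constants --- and translating the subbasic compact-open sets $V(K,U)$ into the uniform estimate on $L$ that classical Stone--Weierstrass provides, which is exactly the distance computation above. An alternative, more self-contained route would be to replay the lattice-theoretic proof of Stone--Weierstrass directly: approximate $t \mapsto |t|$ uniformly on compact intervals by polynomials to show that the compact-open closure $\overline{A}$ is closed under pointwise $\max$ and $\min$, then use two-point interpolation together with a compactness argument on each $K$ to approximate a given $f_0$ uniformly on $K$. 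The reduction above is shorter, however, since it reuses the compact case verbatim.
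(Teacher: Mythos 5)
Your argument is correct and complete: the reduction to the classical Stone--Weierstrass theorem on the compact set $L=\bigcup_i K_i$ is sound, the transfer of the hypotheses to the restricted algebra $A|_L$ is handled properly, and the distance estimate $\delta_i=\mathrm{dist}(f_0(K_i),\R\setminus U_i)>0$ correctly converts the uniform approximation on $L$ into membership in each subbasic set $V(K_i,U_i)$. Note that the paper does not prove this statement at all --- it is quoted from Kelley's \emph{General Topology} --- so there is no in-paper argument to compare against; your proof is the standard textbook reduction and fills that citation in a self-contained way, modulo the (acknowledged, classical) compact Hausdorff case.
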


If $S$ is any subset of $C(X)$, then the \emph{subalgebra generated by $S$}, denoted $\langle S \rangle$, is the subalgebra of $C(X)$ consisting of all finite $\R$-linear combinations of finite products of elements of $S$. That is, $\langle S \rangle$ consists of all elements of $C(X)$ of the form $p(f_1,\dots,f_N)$, where $N\in \mathbb{N}$, $f_1,\dots,f_N\in S$, and $p\in \R[x_1,\dots,x_N]$ is a polynomial in $N$ variables with real coefficients.

For any $S\subset C(X)$, $\langle S \rangle$ automatically contains the constant functions. Moreover, if $S$ separates points in $X$ then so does  $\langle S \rangle$. Thus, by the Stone-Weierstrass theorem, if $S$ separates points then $\langle S \rangle$ is dense is $C(X)$ with respect to the compact-open topology.

\subsection{Compact-open Dense Subsets of $C(\mathcal{M},\R)$} 

A very natural way of constructing a continuous real-valued function on $\mathcal{M}$ is to start with a (compactly supported, continuous) function $f:\mathbb{W}\to \R$ and then to define $\hat{f}:\mathcal{M}\to \R$ according to $\mu\mapsto \int_\mathbb{W} fd\mu$. Continuity of $\hat{f}$ follows from Proposition \ref{prop:OT_conv_implies_vague_conv_and_pers_conv}. In this section, we show first that the algebra generated by such functions is dense in $C(\mathcal{M},\R)$. We then strengthen this to show that it suffices to start with a single function $f\in C_c(\w)$ and consider the subalgebra generated by all scaling and translations of $f$. The function $f$ is what we refer to as a template function.


\begin{definition}
	Given $f\in C_c(\mathbb{W})$, let $\hat{f}:\mathcal{M}\to \R$ be given by $\mu \mapsto \int f d\mu$. Given a subset $A\subset C_c(\mathbb{W})$, we write $\hat{A} := \{\hat{f} \ | \ f\in A\}$. In particular, we have $\widehat{C}_c(\mathbb{W}) = \{\hat{f} \ | \ f\in C_c(\mathbb{W})\}$.
\end{definition}

Next, we observe that $\widehat{C}_c(\mathbb{W}) = \{\hat{f} \ | \ f\in C_c(\mathbb{W})\}$ separates points on $\mathcal{M}$. This fact is essentially the content of the well-known Riesz representation theorem.

\begin{theorem}[The Riesz representation theorem]\label{thm:Riesz} Let $X$ be a locally compact Hausdorff space. Given a positive linear functional $I:C_c(X)\to \R$, there exists a unique positive Radon measure $\mu\in \mathcal{M}^+(X)$ such that $I(f) = \int_X fd\mu$ for all $f\in C_c(X)$.
\end{theorem}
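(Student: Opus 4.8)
The plan is to construct $\mu$ directly from the functional $I$ by an outer--measure argument, following the classical Riesz--Markov--Kakutani approach. First I would fix notation: for $f \in C_c(X)$ and an open set $U\subset X$ write $f \prec U$ to mean $0 \le f \le 1$ and $\supp f \subset U$, and for a compact $K$ write $K \prec f$ to mean $0\le f\le 1$ and $f \equiv 1$ on $K$. Define, for open $U$, $\mu_*(U) := \sup\{I(f) : f \prec U\}$, and for an arbitrary set $E\subset X$, $\mu^*(E) := \inf\{\mu_*(U) : E \subset U \text{ open}\}$. A routine verification --- using that $I$ is positive, hence monotone, and that functions $f_i \prec U_i$ can be combined into $f \prec \bigcup_i U_i$ --- shows $\mu^*$ is an outer measure that agrees with $\mu_*$ on open sets.

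Next I would show that every open set is Carath\'eodory measurable, so that the Borel $\sigma$-algebra is contained in the $\mu^*$-measurable sets; this is the step where Urysohn's lemma for locally compact Hausdorff spaces is indispensable, since it supplies the bump functions $f \prec U$ needed both to test measurability and to build partitions of unity subordinate to finite open covers of compact sets. Let $\mu$ be the resulting Borel measure. Outer regularity is built into the definition of $\mu^*$; local finiteness $\mu(K)<\infty$ follows by choosing $f\in C_c(X)$ with $K \prec f$ and $\supp f$ compact, so that $\mu(K)\le I(f)<\infty$; inner regularity holds on open sets directly from the definition of $\mu_*$ and extends to Borel sets of finite measure by a standard approximation. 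Hence $\mu \in \mathcal{M}^+(X)$ is a Radon measure.

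The substantive step --- and the one I expect to be the main obstacle --- is verifying $I(f) = \int_X f\, d\mu$ for every $f \in C_c(X)$. By linearity, and by splitting into positive and negative parts, it suffices to treat $0 \le f \le 1$ with $\supp f \subset K$ for some compact $K$. One partitions $[0,1]$ into fine intervals $t_0 < t_1 < \dots < t_m$, sets $E_j := f^{-1}((t_{j-1}, t_j])$, chooses open $U_j \supset E_j$ with $\mu(U_j)$ close to $\mu(E_j)$ and on which $f < t_j + \eps$, takes a partition of unity $\{h_j\}$ subordinate to $\{U_j\}$ with $\sum_j h_j \equiv 1$ on $K$, and writes $I(f) = \sum_j I(f h_j)$; comparing this against the Riemann-type sums $\sum_j t_j\,\mu(U_j)$ from above and $\sum_j t_{j-1}\,\mu(E_j)$ from below and letting the mesh tend to zero squeezes $I(f)$ onto $\int_X f\, d\mu$. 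The upper bound $I(f) \le \int_X f\, d\mu$ is the delicate inequality, and it is precisely there that the outer-regular choice of the $U_j$ enters.

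Finally, for uniqueness: if $\mu,\nu \in \mathcal{M}^+(X)$ both represent $I$, then for any compact $K$ and open $U \supset K$, Urysohn gives $f$ with $K \prec f \prec U$, whence $\mu(K) \le I(f) = \int_X f\, d\nu \le \nu(U)$; taking the infimum over such $U$ and using outer regularity of $\nu$ yields $\mu(K) \le \nu(K)$, and by symmetry $\mu(K) = \nu(K)$ for all compact $K$. Inner regularity of Radon measures then forces $\mu = \nu$ on all Borel sets. (In the application of interest $X = \w$ is a locally compact, second-countable metric space, so all the regularity hypotheses above are automatic and the construction simplifies accordingly.)
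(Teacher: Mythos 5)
The paper cites this result as classical background (it is the Riesz--Markov--Kakutani theorem) and supplies no proof of its own, so there is no in-paper argument to compare against. Your outline is the standard textbook construction --- the outer measure built from $\sup\{I(f) : f\prec U\}$ on open sets, Carath\'eodory measurability via Urysohn's lemma, the partition-of-unity squeeze yielding $I(f)=\int_X f\,d\mu$, and uniqueness through compact sets and outer regularity --- and it is correct, including your identification of the genuinely delicate points: the inequality $I(f)\le\int_X f\,d\mu$, and the fact that inner regularity is obtained for free only on open sets and sets of finite measure, which is exactly what the paper's Definition~\ref{def:radon} requires.
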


An immediate corollary of the Riesz representation theorem is that for two measures $\mu,\nu\in \mathcal{M}$, we have $\mu = \nu$ if and only if $\int fd\mu = \int f d\nu$ for all $f\in C_c(\mathbb{W})$. Equivalently, $\mu$ and $\nu$ are distinct if and only if there exists some $f\in C_c(\mathbb{W})$ for which $\int fd\mu \neq \int f d\nu$. Said differently, we have the following.


\begin{corollary}\label{cor:meas_determined_by_cmptly_supp_fnts}
	$\widehat{C}_c(\mathbb{W})$ seperates points in $\mathcal{M}$, i.e., for every pair of distinct $\mu,\nu\in \mathcal{M}$, there exists $f\in C_c(\mathbb{W})$ such that $\hat{f}(\mu)\neq \hat{f}(\nu)$.
\end{corollary}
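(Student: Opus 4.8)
The plan is to deduce this corollary directly from the Riesz representation theorem (Theorem~\ref{thm:Riesz}) by contraposition: if $\hat f(\mu) = \hat f(\nu)$ for every $f \in C_c(\mathbb{W})$, then $\mu = \nu$. Since $\mathbb{W} = \{(b,d) \in \mathbb{R}^2 \mid b < d\}$ is an open subset of $\mathbb{R}^2$, it is a locally compact Hausdorff space, so Riesz applies with $X = \mathbb{W}$. First I would note that any Radon measure $\mu \in \mathcal{M} = \mathcal{M}(\mathbb{W})$ is in particular a positive Radon measure, so it induces a positive linear functional $I_\mu : C_c(\mathbb{W}) \to \mathbb{R}$, $f \mapsto \int_{\mathbb{W}} f\, d\mu = \hat f(\mu)$; positivity holds because $f \geq 0$ implies $\int f\, d\mu \geq 0$, and the integral is finite precisely because $f$ is compactly supported and $\mu$ is locally finite.

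Next, suppose $\mu, \nu \in \mathcal{M}$ satisfy $\hat f(\mu) = \hat f(\nu)$ for all $f \in C_c(\mathbb{W})$. Then $I_\mu = I_\nu$ as linear functionals on $C_c(\mathbb{W})$. By the uniqueness clause in Theorem~\ref{thm:Riesz}, the positive Radon measure representing a given positive linear functional is unique; since both $\mu$ and $\nu$ represent the common functional $I_\mu = I_\nu$, we conclude $\mu = \nu$. Taking the contrapositive yields the statement: for distinct $\mu, \nu$, there must be some $f \in C_c(\mathbb{W})$ with $\hat f(\mu) \neq \hat f(\nu)$, i.e., $\widehat{C}_c(\mathbb{W})$ separates points in $\mathcal{M}$.

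There is essentially no serious obstacle here; the only point deserving care is the bookkeeping that the measures appearing in the definition of $\mathcal{M}$ are genuinely the positive Radon measures to which Riesz applies (the paper's $\mathcal{M}(X)$ consists of Radon measures in the sense of Definition~\ref{def:radon}, which are positive by convention as they take values in $[0,\infty]$), and that $\mathbb{W}$, being open in $\mathbb{R}^2$, meets the hypotheses of the theorem. If one wished to avoid invoking the uniqueness part of Riesz as a black box, an alternative is a direct argument: if $\mu \neq \nu$ then without loss of generality there is a Borel set $E$ with $\mu(E) > \nu(E)$; using inner and outer regularity one finds a compact $K$ and open $U$ with $K \subset E \subset U$ and $\mu(K) > \nu(U)$ (shrinking appropriately), then applies Urysohn's lemma to produce $f \in C_c(\mathbb{W})$ with $\mathbbm{1}_K \leq f \leq \mathbbm{1}_U$, giving $\int f\, d\mu \geq \mu(K) > \nu(U) \geq \int f\, d\nu$. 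But invoking Riesz is cleaner and is the intended route, so I would present the short argument above.
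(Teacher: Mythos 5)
Your proof is correct and takes essentially the same route as the paper, which likewise obtains the corollary as an immediate consequence of the uniqueness clause of the Riesz representation theorem: two Radon measures inducing the same positive linear functional on $C_c(\mathbb{W})$ must coincide. Your added checks (that $\mathbb{W}$ is open in $\mathbb{R}^2$ and hence locally compact Hausdorff, and that the measures in $\mathcal{M}$ are positive) are exactly the right bookkeeping, and the optional Urysohn-based argument is a fine alternative but not needed.
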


Combining the preceding corollary with the Stone-Weierstrass theorem, we have the following.

\begin{theorem}\label{thm:FirstApproximation}
	The subalgebra generated by $\hat{C}_c(\mathbb{W})$ is dense in $C(\mathcal{M},\R)$ in the compact-open topology. More concretely, given a function $F:\mathcal{M}\to\R$, a compact set $K\subset \mathcal{M}$, and $\eps > 0$, there exists an $N\in \N$, functions $f_1,\dots,f_N\in C_c(\mathbb{W})$, and a polynomial $p\in \R[x_1,\dots,x_N]$ such that
	\begin{equation}\label{eq:approx}  \sup_{\mu\in K} \left|p\left(\int f_1d\mu,\dots,\int f_Nd\mu\right) - F(\mu)\right|<\eps.
	\end{equation}
\end{theorem}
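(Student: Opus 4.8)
The plan is to derive this theorem as an essentially immediate corollary of the Stone--Weierstrass theorem (in the compact-open form stated above, with $X = \mathcal{M}$ a metric space) together with the separation result Corollary~\ref{cor:meas_determined_by_cmptly_supp_fnts}. The structural observations needed are: (i) $\mathcal{M} = (\mathcal{M}^\infty, \OT_\infty)$ is a metric space (indeed $\OT_\infty$ is a genuine metric on $\mathcal{M}^\infty$, so Kelley's hypothesis applies); (ii) each $\hat{f}$ is a well-defined continuous real-valued function on $\mathcal{M}$; (iii) the subalgebra $\langle \hat{C}_c(\mathbb{W})\rangle$ separates points and contains constants.

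First I would verify continuity of $\hat{f}$ for $f \in C_c(\mathbb{W})$: if $\OT_\infty(\mu_n,\mu)\to 0$, then by Proposition~\ref{prop:OT_conv_implies_vague_conv_and_pers_conv} we have $\mu_n \xrightarrow{v}\mu$, which by the definition of vague convergence gives $\int f\,d\mu_n \to \int f\,d\mu$, i.e.\ $\hat{f}(\mu_n)\to\hat{f}(\mu)$; since $\mathcal{M}$ is metric, sequential continuity suffices, so $\hat{f}\in C(\mathcal{M},\R)$. Hence $\hat{C}_c(\mathbb{W})\subset C(\mathcal{M},\R)$ and therefore the generated subalgebra $A := \langle \hat{C}_c(\mathbb{W})\rangle$ is a subalgebra of $C(\mathcal{M},\R)$. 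Next, $A$ contains the constant functions automatically (taking the constant polynomial, or $0\in C_c(\mathbb{W})$ together with constant polynomials), and $A$ separates points of $\mathcal{M}$ because it contains $\hat{C}_c(\mathbb{W})$, which separates points by Corollary~\ref{cor:meas_determined_by_cmptly_supp_fnts}. Applying the Stone--Weierstrass theorem cited above, $A$ is dense in $C(\mathcal{M},\R)$ in the compact-open topology. This proves the first sentence of the theorem.

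For the ``more concretely'' statement, I would unwind what compact-open density means: given $F\in C(\mathcal{M},\R)$, a compact $K\subset\mathcal{M}$, and $\eps>0$, the set $V$ of $G\in C(\mathcal{M},\R)$ with $\sup_{\mu\in K}|G(\mu)-F(\mu)|<\eps$ is a compact-open neighborhood of $F$ (it is the preimage of an open ball under the restriction-and-sup-norm map, or a finite intersection of subbasic sets $V(K,U)$ after a standard covering argument on the compact set $F(K)$). By density there is some $G\in A$ with $G\in V$. By the explicit description of $\langle S\rangle$ recalled in the paper, $G = p(\hat{f}_1,\dots,\hat{f}_N)$ for some $N\in\N$, $f_1,\dots,f_N\in C_c(\mathbb{W})$, and $p\in\R[x_1,\dots,x_N]$; since $\hat{f}_i(\mu)=\int f_i\,d\mu$, this is exactly inequality~\eqref{eq:approx}.

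I do not anticipate a serious obstacle here: the content has already been front-loaded into the quoted Stone--Weierstrass theorem, Proposition~\ref{prop:OT_conv_implies_vague_conv_and_pers_conv}, and Corollary~\ref{cor:meas_determined_by_cmptly_supp_fnts}. The only mildly delicate point is making sure the hypotheses of Kelley's theorem are met --- in particular that $\OT_\infty$ is a bona fide metric on $\mathcal{M}^\infty$ (symmetry and the triangle inequality for partial $\infty$-optimal transport, and the fact that $\OT_\infty(\mu,\nu)=0 \Rightarrow \mu=\nu$, which follows from Proposition~\ref{prop:OT_conv_implies_vague_conv_and_pers_conv} combined with Corollary~\ref{cor:meas_determined_by_cmptly_supp_fnts}) so that the compact-open Stone--Weierstrass theorem for metric spaces applies verbatim. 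If one prefers to avoid even that, one can instead invoke the locally-compact-Hausdorff version of Stone--Weierstrass directly, but the metric-space statement is the one already quoted, so I would simply cite it. The translation from abstract compact-open density to the quantitative statement~\eqref{eq:approx} is routine and I would present it in a sentence or two.
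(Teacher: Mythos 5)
Your proposal is correct and follows exactly the paper's route: the paper derives this theorem by combining Corollary~\ref{cor:meas_determined_by_cmptly_supp_fnts} (separation of points via Riesz representation) with the quoted compact-open Stone--Weierstrass theorem, having already noted that continuity of each $\hat{f}$ follows from Proposition~\ref{prop:OT_conv_implies_vague_conv_and_pers_conv}. Your additional care about $\OT_\infty$ being a genuine metric and the unwinding of compact-open density are fine elaborations of steps the paper leaves implicit.
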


The preceding theorem is somewhat unsatisfactory from a computational point of view. It may be difficult or even impossible to encode an arbitrary compactly supported continuous function on a computer. Moreover, numerical issues may make it difficult to compute the integrals $\int f_id\mu$ appearing in \eqref{eq:approx}.

In what follows, we will show that it suffices to choose a single $f\in C_c(\w)$ and consider the collection consisting of all translations and scalings of $f$. The idea is to show that subalgebras of $C_c(\w)$ that are dense with respect to the strict inductive limit topology give rise to subalgebras of $C(\mathcal{M},\R)$ that are dense with respect to the compact-open topology. We then show that, given $f\in C_c(\w)$, the subalgebra generated by the collection of all translations and scalings of $f$ is a dense subalgebra of $C_c(\w)$.

\begin{proposition} \label{prop:dense_algebra_of_CcW_separates_measures}
	Let $\mathcal{A}$ be a subalgebra of $C_c(\mathbb{W})$, dense with respect to the strict inductive limit topology, and let $\mu,\nu\in \mathcal{M}$. Then there exists $g\in\mathcal{A}$ such that $\int g d\mu \neq \int g d\nu$.
		\begin{proof}
		Let $\mu, \nu \in \mathcal{M}$ and $\mu \neq \nu$. By Corollary \ref{cor:meas_determined_by_cmptly_supp_fnts}, there is some $\tilde{f} \in C_c(\mathbb{W})$ such that $\int \tilde{f}d\mu \neq \int \tilde{f} d\nu$. Let $a := | \int fd\mu - \int f d\nu | > 0$. Let $\mathbb{W}=  \bigcup_{n=1}^{\infty} K_n$, where every $K_n \subset \mathbb{W}$ is compact and $K_n \subset K_{n+1}$ for all $n$. Without loss of generality, we can assume that $\supp(\tilde{f}) \subset K_m$ for some $m$.  Let $F, G: C_c(\mathbb{W}) \to \R$ be the linear maps given by $F(f) = \int f d \mu$ and $G(f) = \int f d \nu$ for all $f \in C_c(\mathbb{W})$, respectively. In the strict inductive limit topology on $C_c(\mathbb{W})$, the maps $F$ and $G$ are continuous if and only if their restrictions to every $C_c(K_n) =\{f \in C_c(\mathbb{W}) \ | \ \supp (f) \subset K_n\}$, endowed with the $\|\cdot\|_\infty$ norm, is continuous.
		Since the restriction of $F$ and $G$ to $C_c(K_m)$ are continuous at $\tilde{f}$, there exists $\delta>0$ and $g \in B_\delta(\tilde{f}) \cap \mathcal{A}$ such that $|F(\tilde{f})-F(g) | < \frac{a}{2}$ and $|G(\tilde{f})-G(g)| < \frac{a}{2}$.
		By the triangle inequality,
		\[|F(\tilde{f})-G(\tilde{f}) | \leq |F(\tilde{f})-F(g) | +|F(g)-G(g)|+ |G(g)-G(\tilde{f})|.\] Therefore,
		\[|F(g)-G(g)| \geq |F(\tilde{f})-G(\tilde{f}) | - |F(\tilde{f})-F(g) | - |G(g)-G(\tilde{f})| > a-\frac{a}{2}-\frac{a}{2}=0.\]  Hence $\int g d\mu = F(g) \neq G(g) = \int g d\nu$.
	\end{proof}
\end{proposition}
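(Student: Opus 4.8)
The plan is to reduce the statement to Corollary~\ref{cor:meas_determined_by_cmptly_supp_fnts} by a soft continuity argument. Note first that the conclusion is vacuous unless $\mu\neq\nu$, so I assume this. By Corollary~\ref{cor:meas_determined_by_cmptly_supp_fnts} there is some $\tilde f\in C_c(\mathbb{W})$ with $\int\tilde f\,d\mu\neq\int\tilde f\,d\nu$; set $a:=\left|\int\tilde f\,d\mu-\int\tilde f\,d\nu\right|>0$.

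Next I would check that the linear functionals $F,G\colon C_c(\mathbb{W})\to\mathbb{R}$ given by $F(f)=\int f\,d\mu$ and $G(f)=\int f\,d\nu$ are continuous for the strict inductive limit topology. By the defining (universal) property of that topology, a linear map out of $C_c(\mathbb{W})$ is continuous if and only if its restriction to each $C_c(K_n)$, equipped with $\|\cdot\|_\infty$, is continuous; and on $C_c(K_n)$ one has $|F(f)|\leq\mu(K_n)\,\|f\|_\infty<\infty$ by local finiteness of the Radon measure $\mu$, and likewise for $G$. Hence $F-G$ is a continuous linear functional on $C_c(\mathbb{W})$.

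Then the set $U:=(F-G)^{-1}\!\big(\mathbb{R}\setminus[-a/2,\,a/2]\big)$ is open in the strict inductive limit topology and contains $\tilde f$, since $|(F-G)(\tilde f)|=a>a/2$. As $\mathcal{A}$ is dense, it meets $U$; any $g\in\mathcal{A}\cap U$ then satisfies $\left|\int g\,d\mu-\int g\,d\nu\right|=|(F-G)(g)|>a/2>0$, so $\int g\,d\mu\neq\int g\,d\nu$. Equivalently, without naming $U$: choose $m$ with $\supp\tilde f\subset K_m$ and, using continuity of $F$ and $G$ on $C_c(K_m)$ at $\tilde f$, pick $g\in\mathcal{A}$ with $\|g-\tilde f\|_\infty$ small enough that $|F(\tilde f)-F(g)|,\,|G(\tilde f)-G(g)|<a/2$; the triangle inequality $|F(g)-G(g)|\geq|F(\tilde f)-G(\tilde f)|-|F(\tilde f)-F(g)|-|G(g)-G(\tilde f)|>0$ then finishes it.

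The only substantive point is the continuity of $F$ and $G$ for the strict inductive limit topology: this is where one must invoke the characterization of continuous linear maps out of such a limit together with the local finiteness of Radon measures. Once that is in hand, the conclusion is a purely formal consequence of the density of $\mathcal{A}$ and the point-separation from Corollary~\ref{cor:meas_determined_by_cmptly_supp_fnts}.
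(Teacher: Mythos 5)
Your proposal is correct and follows essentially the same route as the paper's proof: invoke Corollary~\ref{cor:meas_determined_by_cmptly_supp_fnts} to get a separating $\tilde f$, observe that $F$ and $G$ are continuous for the strict inductive limit topology via their restrictions to each $C_c(K_n)$, and then use density of $\mathcal{A}$ near $\tilde f$ together with the triangle inequality. If anything, you are slightly more complete than the paper, which asserts continuity of the restrictions without the bound $|F(f)|\leq\mu(K_n)\,\|f\|_\infty$ that you supply.
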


\begin{definition}
	Given $f\in C_c(\mathbb{W})$, define
	\[ T(f) := \{g\in C_c(\mathbb{W}) \ | \ g(\mathbf{x}) = f(a\mathbf{x} + \mathbf{b})\textup{ for some } a\in \R,\, \mathbf{b}\in \R^2\},\]
	and define $\hat{T}(f) := \{\hat{g} \ | \ g\in T(f)\}$. We refer to $f$ as a template function and to $T(f)$ as a template system.
\end{definition}

\begin{theorem}\label{thm:transformations_dense}
	Fix $f\in C_c(\mathbb{W})$ and let $\mathcal{A}$ be the subalgebra of $C_c(\mathbb{W})$ generated by $T(f)$. Then $\mathcal{A}$ is dense in $C_c(\mathbb{W})$ with respect to the strict inductive limit topology.
		\begin{proof}  Let $K_n$ be an increasing sequence of compact sets with $\bigcup_n K_n = \w$. We can assume without loss of generality that $\supp(g)\subset K_1$. We claim that $T(f)\cap C_c(K_n)$ separates points on $K_n$. Indeed, given distinct $x,y\in K_n$, we can scale and translate $f$ to obtain a function $\tilde{f}\in T(f)$ whose support is contained in $K_n$ and for which $\tilde{f}(x) \neq 0$ but $\tilde{f}(y) = 0$. It then follows from the Stone-Weierstrass theorem for compact spaces that $\langle T(f)\cap C_c(K_n)\rangle$ is dense in $C_c(K_n) = C(K_n)$ with respect to the sup-norm.
		
		Now let $g\in C_c(\w)$ and let $U$ be an open neighborhood of $g$. By definition of the strict inductive limit topology, the inclusions $C_c(K_n)\hookrightarrow C_c(\w)$ are all continuous. Hence $U\cap C_c(K_n)$ is open in $C_c(K_n)$. By density of $\langle T(f)\cap C_c(K_n)\rangle$ in $C_c(K_n)$, there is some $\alpha\in \langle T(f)\cap C_c(K_n)\rangle$ with $\alpha\in U$. Since $\langle T(f)\cap C_c(K_n)\rangle\subset \langle T(f)\rangle$, the result follows.
	\end{proof}
\end{theorem}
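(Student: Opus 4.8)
The plan is to prove \Cref{thm:transformations_dense} by reducing the statement about the strict inductive limit topology on $C_c(\w)$ to a family of classical Stone--Weierstrass statements on each compact piece $K_n$. Concretely, fix an exhaustion $\w = \bigcup_n K_n$ by compact sets with $K_n \subset K_{n+1}$, and let $\mathcal{A} = \langle T(f)\rangle$. I would first recall that a basic open neighborhood of a given $g \in C_c(\w)$ in the strict inductive limit topology, when intersected with any $C_c(K_n)$ containing $\supp(g)$, is open for the sup-norm topology on $C_c(K_n)$; this is exactly the defining property that the inclusions $C_c(K_n) \hookrightarrow C_c(\w)$ are continuous. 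So it suffices to show that for every $n$ and every $g$ with $\supp(g) \subset K_n$, the sup-norm closure of $\mathcal{A} \cap C_c(K_n)$ contains $g$ --- equivalently, that $\langle T(f) \cap C_c(K_n)\rangle$ is sup-norm dense in $C(K_n)$ (identifying $C_c(K_n)$ with $C(K_n)$ since $K_n$ is compact, modulo the harmless subtlety about vanishing on $\partial K_n$, which one sidesteps by enlarging $K_n$ slightly or by working with an interior compact neighborhood).

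The core step is the point-separation argument on each $K_n$. Given distinct $x, y \in K_n$, I need a function in $T(f) \cap C_c(K_n)$ that takes different values at $x$ and $y$. Since $f \in C_c(\w)$ is a fixed nonzero continuous compactly supported function, there is some point $\mathbf{z}_0$ where $f(\mathbf{z}_0) \neq 0$ and some point where $f$ vanishes (indeed $f$ vanishes outside a compact set). By choosing the scaling parameter $a$ large enough (to shrink the support) and the translation $\mathbf{b}$ appropriately, I can arrange that the map $\mathbf{x} \mapsto f(a\mathbf{x} + \mathbf{b})$ has support inside $K_n$, is nonzero at $x$, and is zero at $y$ --- this uses local compactness of $\w$ and that $x$ is an interior point of $K_n$ (again handled by taking $K_n$ to have nonempty interior containing the relevant points, or by passing to a slightly larger compact set). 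This shows $T(f) \cap C_c(K_n)$ separates points of $K_n$ and contains, for each point, a function nonvanishing there; together with the fact that $\langle T(f) \cap C_c(K_n)\rangle$ is a subalgebra, the Stone--Weierstrass theorem for compact Hausdorff spaces (in its version for subalgebras that separate points and vanish nowhere) gives sup-norm density in $C(K_n)$.

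Finally I would assemble the pieces: given $g \in C_c(\w)$ and an open neighborhood $U$ of $g$ in the strict inductive limit topology, pick $n$ with $\supp(g) \subset K_n$; then $U \cap C_c(K_n)$ is a sup-norm open neighborhood of $g$ in $C_c(K_n)$, so by the density just established there is some $\alpha \in \langle T(f) \cap C_c(K_n)\rangle$ with $\alpha \in U$; since $\langle T(f) \cap C_c(K_n)\rangle \subset \langle T(f)\rangle = \mathcal{A}$, we conclude $\mathcal{A} \cap U \neq \emptyset$, proving density.

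The main obstacle I anticipate is the boundary issue: Stone--Weierstrass on a compact set $K_n$ produces density in $C(K_n)$, but elements of $C_c(K_n)$ as a subspace of $C_c(\w)$ ought to vanish on (a neighborhood of) $\partial K_n$, and scalings/translations of $f$ with support landing exactly inside $K_n$ need not be dense in the subspace of functions vanishing near the boundary. The clean fix is to not insist the exhaustion be by the ``tightest'' compacts: given $g$, choose a compact $L$ with $\supp(g) \subset \mathrm{int}(L)$ and run the separation-and-Stone--Weierstrass argument on $L$, approximating $g$ uniformly on $L$ by an element of $\langle T(f) \cap C_c(L')\rangle$ for a slightly larger $L'$; since $g$ and all the approximants have support in a fixed compact set, uniform approximation there is exactly strict-inductive-limit approximation. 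One should also note the harmless point that $a = 0$ is allowed in the definition of $T(f)$, giving constant functions, but this is not needed since the algebra generated always contains constants anyway; the real content is point separation, which the scaling/translation freedom supplies.
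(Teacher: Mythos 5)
Your proposal follows essentially the same route as the paper's proof: exhaust $\w$ by compacts $K_n$, show $T(f)\cap C_c(K_n)$ separates points by scaling and translating $f$, apply Stone--Weierstrass on each $K_n$ for sup-norm density, and conclude via continuity of the inclusions $C_c(K_n)\hookrightarrow C_c(\w)$. The only difference is that you explicitly flag and repair the boundary/support subtlety (the identification $C_c(K_n)=C(K_n)$ and the need for $x$ to be interior to $K_n$), which the paper's proof glosses over; your fix via a slightly enlarged compact is the standard and correct one.
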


\begin{theorem}\label{thm:function_approx}
	For a fixed $f\in C_c(\w)$, the subalgebra generated by $\hat{T}(f)$ is dense in $C(\mathcal{M},\R)$ in the compact-open topology. More concretely, given a function $F:\mathcal{M}\to\R$, a compact set $K\subset \mathcal{M}$, and $\eps > 0$, there exists an $N\in \N$, scalars $a_1,\dots, a_N\in \R$, vectors $\mathbf{b}_1,\dots,\mathbf{b}_N\in \R^2$, and a polynomial $p\in \R[x_1,\dots,x_N]$ such that
	\begin{equation}\label{eq:approx2}  \sup_{\mu\in K} \left|p\left(\int f(a_1\mathbf{x} + \mathbf{b}_1)d\mu(\mathbf{x}),\dots,\int f(a_N\mathbf{x} + \mathbf{b}_N)d\mu(\mathbf{x})\right) - F(\mu)\right|<\eps.
	\end{equation}
	\begin{proof}
		By Theorem \ref{thm:transformations_dense}, the subalgebra generated by $T(f)$ is dense in $C_c(\w)$ with respect to the strict inductive limit topology. Then by Proposition \ref{prop:dense_algebra_of_CcW_separates_measures}, $T(f)$ separates points in $\mathcal{M}$. It then follows from the Stone-Weierstrass theorem that the subalgebra generated by $\hat{T}(f)$ is dense in $C(\mathcal{M},\R)$ with respect to the compact-open topology.
	\end{proof}
\end{theorem}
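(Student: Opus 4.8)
The plan is to derive the statement from the Stone--Weierstrass theorem for the compact-open topology (stated above), applied to the subalgebra $\langle \hat{T}(f)\rangle \subseteq C(\mathcal{M},\R)$ generated by the linear representations $\hat{g}$ with $g\in T(f)$. To invoke that theorem I must check three things: that each $\hat{g}$ with $g\in T(f)$ really is a continuous function on $(\mathcal{M},\OT_\infty)$, that $\langle\hat{T}(f)\rangle$ contains the constants, and that $\langle\hat{T}(f)\rangle$ separates the points of $\mathcal{M}$. The first holds by Proposition \ref{prop:OT_conv_implies_vague_conv_and_pers_conv}, since $\OT_\infty$-convergence implies vague convergence and hence $\int g\,d\mu_n \to \int g\,d\mu$; the second is automatic for the subalgebra generated by any subset of $C(\mathcal{M},\R)$. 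So the work is in the separation property.

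For separation I would chain the two preceding results. By Theorem \ref{thm:transformations_dense}, the subalgebra $\mathcal{A} := \langle T(f)\rangle$ of $C_c(\w)$ is dense for the strict inductive limit topology. Proposition \ref{prop:dense_algebra_of_CcW_separates_measures} then applies to $\mathcal{A}$ and shows that $T(f)$ separates points of $\mathcal{M}$: given distinct $\mu,\nu\in\mathcal{M}$ it produces $g$ built from $T(f)$ with $\int g\,d\mu \neq \int g\,d\nu$, and hence $\langle\hat{T}(f)\rangle$ separates $\mu$ from $\nu$. With all three hypotheses verified, the Stone--Weierstrass theorem yields that $\langle\hat{T}(f)\rangle$ is dense in $C(\mathcal{M},\R)$ for the compact-open topology --- the first assertion of the theorem.

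It remains to read off the quantitative form \eqref{eq:approx2}. Since the target $\R$ is metric, the compact-open topology on $C(\mathcal{M},\R)$ coincides with the topology of uniform convergence on compact subsets; thus density of $\langle\hat{T}(f)\rangle$ means precisely that a given $F\in C(\mathcal{M},\R)$ can be approximated to within $\eps$, uniformly on the given compact set $K$, by some member of $\langle\hat{T}(f)\rangle$. Unwinding the definition of that subalgebra, such a member has the form $\mu \mapsto p\bigl(\int f(a_1\mathbf{x}+\mathbf{b}_1)\,d\mu(\mathbf{x}),\dots,\int f(a_N\mathbf{x}+\mathbf{b}_N)\,d\mu(\mathbf{x})\bigr)$ for some $N\in\N$, scalars $a_i\in\R$, vectors $\mathbf{b}_i\in\R^2$, and a polynomial $p\in\R[x_1,\dots,x_N]$, which is exactly what \eqref{eq:approx2} asserts.

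The step I expect to be the crux is the separation of points, i.e.\ getting from a density statement on $C_c(\w)$ to a statement about Radon measures. Two things must be kept straight: that Theorem \ref{thm:transformations_dense} provides density of the \emph{generated subalgebra} $\langle T(f)\rangle$ (not just of the linear span of $T(f)$) in the \emph{strict inductive limit} topology, and that this is exactly the hypothesis Proposition \ref{prop:dense_algebra_of_CcW_separates_measures} is designed to consume. The strict inductive limit topology is the right one here because integration against a fixed $\mu\in\mathcal{M}$ restricts to a $\|\cdot\|_\infty$-continuous functional on each $C_c(K_n)$, so a function approximating a prescribed $\tilde{f}\in C_c(\w)$ in this topology has a nearby integral; this is what allows the separating behavior to survive the passage from $C_c(\w)$ to $\mathcal{M}$.
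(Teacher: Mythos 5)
Your argument follows the paper's proof step for step: Theorem \ref{thm:transformations_dense} for density of $\langle T(f)\rangle$ in $C_c(\w)$, Proposition \ref{prop:dense_algebra_of_CcW_separates_measures} for separation of points, then Stone--Weierstrass. The extra bookkeeping you supply (continuity of each $\hat{g}$ via Proposition \ref{prop:OT_conv_implies_vague_conv_and_pers_conv}, constants, and the translation of compact-open density into the explicit estimate \eqref{eq:approx2}) is correct and is left implicit in the paper.

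However, the step you single out as the crux contains a genuine gap --- one the paper's own proof shares. Proposition \ref{prop:dense_algebra_of_CcW_separates_measures} applied to $\mathcal{A}=\langle T(f)\rangle$ produces a separating $g\in\langle T(f)\rangle$, i.e.\ a linear combination of \emph{products} $\prod_j g_{ij}$ of elements of $T(f)$. Integration against $\mu$ is linear but not multiplicative, so $\hat{g}=\sum_i c_i\,\widehat{\textstyle\prod_j g_{ij}}$ need not lie in $\langle\hat{T}(f)\rangle$: elements of the latter are polynomials applied \emph{outside} the integrals, as in \eqref{eq:approx2}, whereas $g$ puts the products \emph{inside} the integral. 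Thus ``there exists $g\in\langle T(f)\rangle$ with $\int g\,d\mu\neq\int g\,d\nu$'' does not give that $\hat{T}(f)$, or the subalgebra it generates, separates $\mu$ from $\nu$; your ``hence'' (and the paper's ``Then by Proposition \ref{prop:dense_algebra_of_CcW_separates_measures}, $T(f)$ separates points in $\mathcal{M}$'') is a non sequitur. The gap is not cosmetic: if $\int_{\R^2}f\,d\mathbf{x}=0$, then every $g(\mathbf{x})=f(a\mathbf{x}+\mathbf{b})$ in $T(f)$ satisfies $\int_\w g\,d\lambda=|a|^{-2}\int_{\R^2}f\,d\mathbf{x}=0$ for Lebesgue measure $\lambda$ on $\w$, so every element of $\langle\hat{T}(f)\rangle$ takes the value $p(0,\dots,0)$ at both $\lambda$ and $2\lambda$, two distinct elements of $\mathcal{M}$; taking $K=\{\lambda,2\lambda\}$ and $F=\hat{h}$ with $\int h\,d\lambda\neq 0$ then defeats \eqref{eq:approx2}. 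To close the gap one needs density of the \emph{linear span} of $T(f)$ in $C_c(\w)$ (so that the separating $g$ is a linear combination and $\hat{g}\in\mathrm{span}(\hat{T}(f))\subset\langle\hat{T}(f)\rangle$), which is a strictly stronger statement than Theorem \ref{thm:transformations_dense} and requires an additional hypothesis on the template such as $\int_{\R^2}f\,d\mathbf{x}\neq 0$.
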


\section{Kernel Density Estimation of the Expected Persistence Measure}\label{sec:KDE} In the preceding section, we showed that for a continuous functions $F:\mathcal{M}\to \R$ and $\mu\in \mathcal{M}$, $F(\mu)$ can be approximated by polynomial combinations of features of the form $\int f d\mu$. We will view the measure $\mu$ as being the expected persistence measure associated to some random process for generating persistence diagrams. This is the case, for example, under the conditions of Theorem \ref{thm:chazal_density}. In this setting, we do not have access to the true expected persistence measure $\mu$, but rather to a sample $D_1,\dots,D_n$ of i.i.d.\ random persistence diagrams with $\mu = \mathbb{E}(D_1)$. In order to approximate the density of $\mu$, we can apply some method of density estimation. We will focus on kernel density estimation.


\begin{definition} \label{def:kernel}
	A \emph{kernel} (on $\R^2$) is a non-negative, Lebesgue integrable function $K:\R^2\to \R$ such that
	\begin{enumerate}
		\item $\int_{\R^2} K(\mathbf{x})d\mathbf{x}= 1$ (normalization),
		\item $K(-\mathbf{x}) = K(\mathbf{x})$ for all $\mathbf{x}\in \R^d$ (symmetry).
	\end{enumerate}
\end{definition}

There are many popular kernels such as the \emph{Gaussian} and \emph{Epanechnikov} kernels. For our experiments, we use the \emph{step} (or \emph{uniform}) kernel. For a measurable subset $A\subset \mathbb{R}^2$ satisfying $-A = A$, define the \emph{step kernel on $A$} by $K_A := \frac{1}{\area(A)}1_A$, where $1_A$ denotes the indicator function on $A$.

\begin{definition}
	Let $D_1,\dots, D_n$ be i.i.d. random persistence diagrams and suppose that the expected persistence measure $\mu = \mathbb{E}[D_1]$ has a density $\rho$ with respect to Lebesgue measure. For a fixed kernel $K$, the \emph{kernel density estimate} of $\rho$ is given by
	\begin{equation}\label{eq:KDE} \hat{\rho}(\mathbf{x}) = \frac{1}{n}\sum_{i = 1}^n \sum_{\mathbf{r}\in D_i}K(\mathbf{x}-\mathbf{r}).
	\end{equation}
	
\end{definition}

Now given a compactly supported function $f:\mathbb{W}\to \R$, we wish to approximate the integral $\int_\mathbb{W}fd\mu = \int_\mathbb{W}f(\mathbf{x})\rho(\mathbf{x}) d\mathbf{x}$. Substituting the density estimate \eqref{eq:KDE} for $\rho$, we obtain

\begin{align*}\label{eq:integral_f_d_rho}
	\int f(\mathbf{x})\hat{\rho}(\mathbf{x})d\mathbf{x} & = \int f(\mathbf{x})  \frac{1}{n}\sum_{i = 1}^n \sum_{\mathbf{r}\in D_1}K(\mathbf{x}-\mathbf{r})d\mathbf{x}\\
	& = \frac{1}{n} \sum_{i = 1}^n \sum_{\mathbf{r}\in D_i}\int f(\mathbf{x})K(\mathbf{x}-\mathbf{r})d\mathbf{x} = \frac{1}{n}\sum_{i = 1}^n \sum_{\mathbf{r}\in D_i} f*K(\mathbf{r}),
\end{align*}
where $f*K$ denotes the convolution of $f$ with $K$. Thus, for a fixed kernel $K$, the approximation of $\int fd\mu$ comes down to evaluating the convolution $f*K$ at the points of the diagrams $D_1,\dots,D_n$.

In the case that $K = K_A$ is a step kernel, the convolution $f*K$ is given by
\begin{equation}\label{eq:step_kernel_conv}
	(f*K)(\mathbf{x}) = \int f(\mathbf{u})K(\mathbf{x}-\mathbf{u})d\mathbf{u} = \int f(\mathbf{u})1_A(\mathbf{x}-\mathbf{u}) = \int_{\mathbf{x}-A}f(\mathbf{u})d\mathbf{u},
\end{equation}
where $\mathbf{x}-A = \{\mathbf{x}-\mathbf{a} \ | \ \mathbf{a}\in A\}$.

In our experiments, we make a further simplification by using step functions for our templates as well. While step functions are not continuous, we may think of a step function on a set $B$ as approximating a continuous function that takes value $1$ on $B$ and then quickly decreases to $0$. If $f = 1_B$ for some $B\subset \w$ and $K = K_A$ is a step kernel, then \eqref{eq:step_kernel_conv} reduces to $(f*K)(\mathbf{x}) = \frac{1}{\area(A)}\int_{(\mathbf{x}-A)\cap B}1d\mathbf{u} = \frac{\area{((\mathbf{x}-A)\cap B})}{\area{A}}$. If, further, $A$ and $B$ are rectangles, then the area of $(\mathbf{x}-A)\cap B$ has a simple closed-form expression.

\section{Compact Subsets of Measure Space}
Theorem \ref{thm:function_approx} implies that we can approximate continuous functions arbitrarily well on compact subsets of $\mathcal{M}$. For this reason, we would like to better understand the (relatively) compact sets of $\mathcal{M}$. It turns out that the following subspace is slightly more convenient to work with.

\begin{definition}\label{def:off_d_finite}
	Let $\mathcal{M}^\infty_f := \{\mu\in \mathcal{M}^\infty \ | \ \mu(\mathbb{W}_\eps) <\infty \textup{ for all $\eps > 0$}\}$,
	where for $\eps > 0$,  $\mathbb{W}_\eps := \{(x,y) \in \R^2 \ | \ y-x > \eps\}\subset \mathbb{W}$. Elements of $\mathcal{M}^\infty_f$ are said to be \emph{off-diagonally finite}.
\end{definition}

In this section, we study the topology of $\mathcal{M}_f^\infty$. Wee give necessary conditions for compactness in $\mathcal{M}_f^\infty$. Recall that, for a topological space $X$, a set $S  \subset X$ is relatively compact if and only if $\overline{S}$ is compact in $X$. If $X = (X,d)$ is complete metric space, then $S\subset X$ is relatively compact if and only if $S$ is totally bounded. For this reason, we will begin by showing that $(\mathcal{M}_f^\infty,\OT_\infty)$ is complete.

\subsection{Completeness of $\mathcal{M}_f^\infty$} In this section we show that $(\mathcal{M}_f^\infty,\OT_\infty)$ is complete. We will in fact prove that $\mathcal{M}_f^\infty$ is the Cauchy completion of $\mathcal{M}^\infty_{\textup{fin}}$, the subspace of finite measures. Since the finite measures are the ones that arise in practice, $\mathcal{M}_f^\infty$ is a convenient space to work with for both theoretical and practical reasons.


Recall that for every $\eps>0$ and $U \subset \w$, the \emph{$\epsilon$-thickening of $U$} is defined by $U^{\eps} := \{\mathbf{x} \in \w \ | \ \|\mathbf{x} - U\|_q < \eps\}$, where $\|\mathbf{x}-U\|_q := \inf_{\mathbf{u}\in U} \|\mathbf{x}-\mathbf{y}\|_q$.

To show that $\mathcal{M}_f^\infty$, we will show that it is a closed subspace of $\mathcal{M}^\infty$. For this, we need the following lemma which is reminiscent of an ``interleaving" theorem for measures. 

\begin{lemma}\label{lem:U_eps}
	Let $\eps>0$ and let $\mu, \nu \in \mathcal{M}^\infty$.  For any measurable $U \subset \w_{\eps}$, if $\OT_{\infty}(\mu, \nu) < \frac{\eps}{2}$ then $\mu(\overline{U}) \leq \nu(U^{\eps/2})$  and $\nu(\overline{U}) \leq \mu(U^{\eps/2})$.
	\begin{proof}
		Let $\mu, \nu \in \mathcal{M}^\infty$ such that $\OT_{\infty}(\mu, \nu) < \frac{\eps}{2}$. Let $\pi\in \Adm(\mu,\nu)$ with $C_\infty(\pi)<\frac{\eps}{2}$. We claim that $\pi(\overline{U}\times \ow) = \pi(\overline{U} \times U^{\eps/2})$. Indeed, if $(\mathbf{x},\mathbf{x}')\in \spt(\pi)$ with $\mathbf{x}\in \overline{U} \subset \ow_\eps$ then $d(\mathbf{x},\mathbf{x}') \leq C_\infty(\pi) < \frac{\eps}{2}$ so that $\mathbf{x}'\in U^{\eps/2}$. Thus $\overline{U} \times (\ow \backslash U^{\eps/2}) \subset (\ow\times \ow)\backslash \spt(\pi)$ so that, by Lemma \ref{lem:cplmt_spt_null}, $\pi(\overline{U} \times \ow) = \pi(\overline{U}\times U^{\eps/2}) + \pi(\overline{U} \times (\ow\backslash U^{\eps/2})) = \pi(\overline{U}\times U^{\eps/2})$, as claimed. Now $\mu(\overline{U}) = \pi(\overline{U}\times \ow) = \pi(\overline{U}\times U^{\eps/2}) \leq \pi(\ow \times U^{\eps/2}) = \nu(U^{\eps/2})$. Similarly, we can show that $\nu(\overline{U}) \leq \mu(U^{\eps/2})$.
	\end{proof}
\end{lemma}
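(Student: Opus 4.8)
The plan is to fix an almost-optimal coupling $\pi$ between $\mu$ and $\nu$ and argue that, under a transport plan of cost below $\eps/2$, mass sitting on $\overline{U}$ — which is bounded away from the diagonal by $\eps$ — cannot be moved outside the $(\eps/2)$-thickening $U^{\eps/2}$. Concretely, since the hypothesis $\OT_\infty(\mu,\nu)<\eps/2$ is a \emph{strict} inequality, the infimum defining $\OT_\infty$ lies below $\eps/2$, so there is some $\pi\in\Adm(\mu,\nu)$ with $C_\infty(\pi)<\eps/2$. Recalling that $C_\infty(\pi)=\sup\{d(\mathbf{x},\mathbf{x}') \ | \ (\mathbf{x},\mathbf{x}')\in\spt(\pi)\}$, this says precisely that $d(\mathbf{x},\mathbf{x}')<\eps/2$ for every $(\mathbf{x},\mathbf{x}')\in\spt(\pi)$.

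The main step is to show $\pi\big(\overline{U}\times(\ow\setminus U^{\eps/2})\big)=0$. Let $(\mathbf{x},\mathbf{x}')\in\spt(\pi)$ with $\mathbf{x}\in\overline{U}$. Since $U\subset\w_\eps=\{y-x>\eps\}$, its closure satisfies $\overline{U}\subset\overline{\w_\eps}=\{(x,y) \ | \ y-x\ge\eps\}\subset\w$, and every such point lies at $\|\cdot\|_q$-distance at least $\eps/2$ from $\Delta$: for any $(t,t)\in\Delta$ one has $\|(x-t,y-t)\|_q\ge\|(x-t,y-t)\|_\infty\ge\tfrac{1}{2}(|x-t|+|y-t|)\ge\tfrac{1}{2}|y-x|\ge\eps/2$. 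Hence in $d(\mathbf{x},\mathbf{x}')=\min\big(\|\mathbf{x}-\mathbf{x}'\|_q,\ \|\mathbf{x}-\Delta\|_q+\|\mathbf{x}'-\Delta\|_q\big)$ the second branch is $\ge\eps/2$ and therefore cannot realize the value $d(\mathbf{x},\mathbf{x}')<\eps/2$; so $\|\mathbf{x}-\mathbf{x}'\|_q=d(\mathbf{x},\mathbf{x}')<\eps/2$. Since $\mathbf{x}\in\overline{U}$ gives $\|\mathbf{x}-U\|_q=0$, the triangle inequality yields $\|\mathbf{x}'-U\|_q<\eps/2$, i.e.\ $\mathbf{x}'\in U^{\eps/2}$. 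Thus $\overline{U}\times(\ow\setminus U^{\eps/2})$ is disjoint from $\spt(\pi)$, and because $\pi$ is a Radon, hence inner regular, measure, Lemma \ref{lem:cplmt_spt_null} gives the claimed vanishing.

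To finish, observe that $\overline{U}$ is a closed (hence Borel) subset of $\w$ and $U^{\eps/2}$ is an open subset of $\w$, so the marginal conditions of $\pi$ apply to both. Splitting $\ow$ into $U^{\eps/2}$ and $\ow\setminus U^{\eps/2}$ and using those marginals,
\[ \mu(\overline{U})=\pi(\overline{U}\times\ow)=\pi(\overline{U}\times U^{\eps/2})\le\pi(\ow\times U^{\eps/2})=\nu(U^{\eps/2}). \]
The bound $\nu(\overline{U})\le\mu(U^{\eps/2})$ follows by running the same argument on the coordinate-flipped coupling in $\Adm(\nu,\mu)$ (equivalently, by the symmetry of the hypotheses in $\mu$ and $\nu$). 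I expect the only genuine obstacle to be the diagonal bookkeeping in the second paragraph: one must be certain that a cheap transport plan cannot move mass from $\overline{U}$ "toward the diagonal", which is exactly the observation that the $\|\mathbf{x}-\Delta\|_q+\|\mathbf{x}'-\Delta\|_q$ branch of the pseudometric $d$ is inactive on the relevant part of $\spt(\pi)$; the remaining ingredients — selecting the near-optimal coupling, invoking Lemma \ref{lem:cplmt_spt_null}, and chasing marginals — are routine.
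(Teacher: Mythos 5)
Your proof is correct and follows essentially the same route as the paper's: pick a coupling $\pi$ with $C_\infty(\pi)<\eps/2$, show that $\spt(\pi)$ misses $\overline{U}\times(\ow\setminus U^{\eps/2})$, invoke Lemma \ref{lem:cplmt_spt_null}, and chase marginals. The only difference is that you spell out why the diagonal branch of the pseudometric $d$ is inactive for $\mathbf{x}\in\overline{U}\subset\overline{\w_\eps}$ (so that $d(\mathbf{x},\mathbf{x}')=\|\mathbf{x}-\mathbf{x}'\|_q$), a detail the paper leaves implicit in the step ``$d(\mathbf{x},\mathbf{x}')<\frac{\eps}{2}$ so that $\mathbf{x}'\in U^{\eps/2}$.''
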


Note that by setting $U=\ow_{\eps}$, $U^{\eps/2}=(\ow_{\eps})^{\eps/2}=\w_{\eps/2}$. Hence if $\OT_{\infty}(\mu, \nu) < \frac{\eps}{2}$ then $\mu(\ow_{\eps}) \leq \nu(\w_{\eps/2})$  and $\nu(\ow_{\eps}) \leq \mu(\w_{\varepsilon/2})$.

\begin{theorem}\label{thm:M_f_is_complete}
	$\mathcal{M}^\infty_f$ is closed in $(\mathcal{M}^\infty, \textup{OT}_\infty)$ and hence complete.
	\begin{proof}
		Let $\mu\in \mathcal{M}^\infty$ be a limit point of $\mathcal{M}^\infty_f$ and let $(\mu_n)\subset \mathcal{M}^\infty_f$ with $\OT_\infty(\mu_n,\mu)\to 0$. We claim that $\mu\in \mathcal{M}^\infty_f$. Fix $\varepsilon > 0$, choose $N\in \N$ large enough that $\OT_\infty(\mu,\mu_N) <\frac{\eps}{2}$. By Lemma \ref{lem:U_eps}, $\mu(\w_{\eps}) \leq \mu(\ow_{\eps}) \leq  \mu_N(\w_{\eps/2}) < \infty$ and thus $\mu \in \mathcal{M}_f^\infty$. By Theorem \ref{thm:completeness}, $(\mathcal{M}^\infty, \textup{OT}_\infty)$ is complete and hence so is $\mathcal{M}^\infty_f$.
	\end{proof}
\end{theorem}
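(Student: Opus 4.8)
The plan is to get completeness for free from closedness: by Theorem \ref{thm:completeness} the ambient space $(\mathcal{M}^\infty,\OT_\infty)$ is complete, and a closed subspace of a complete metric space is complete, so it suffices to prove that $\mathcal{M}^\infty_f$ is closed in $\mathcal{M}^\infty$. Since $\OT_\infty$ is a metric, it is enough to check sequential closedness. So I would start by taking a point $\mu\in\mathcal{M}^\infty$ together with a sequence $(\mu_n)\subseteq\mathcal{M}^\infty_f$ such that $\OT_\infty(\mu_n,\mu)\to 0$, and aim to verify the defining property of $\mathcal{M}^\infty_f$ for $\mu$, namely that $\mu(\mathbb{W}_\eps)<\infty$ for every $\eps>0$.

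Fix $\eps>0$ and choose $N$ with $\OT_\infty(\mu,\mu_N)<\eps/2$. The key step is to transfer the finiteness of $\mu_N$ on an off-diagonal strip to a finiteness bound for $\mu$, and this is exactly what the ``interleaving'' estimate Lemma \ref{lem:U_eps} is designed to do. Applying it (in the form of the remark immediately following the lemma, with $U=\ow_\eps$) gives $\mu(\ow_\eps)\le\mu_N(\mathbb{W}_{\eps/2})$; since $\mu_N\in\mathcal{M}^\infty_f$ the right-hand side is finite, and monotonicity yields $\mu(\mathbb{W}_\eps)\le\mu(\ow_\eps)<\infty$. As $\eps>0$ was arbitrary, $\mu\in\mathcal{M}^\infty_f$, which shows $\mathcal{M}^\infty_f$ is closed, and completeness then follows from Theorem \ref{thm:completeness}.

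The main obstacle has, in effect, already been dispatched: essentially all of the real content is contained in Lemma \ref{lem:U_eps} (and in Divol's completeness theorem, Theorem \ref{thm:completeness}), and given those this is a short argument. The only point that requires a moment of care is the bookkeeping with thickenings, i.e.\ that thickening the closed strip $\ow_\eps$ by $\eps/2$ in the $q$-norm lands inside $\mathbb{W}_{\eps/2}$ (or at least inside $\mathbb{W}_\delta$ for some $\delta>0$), so that the finiteness of $\mu_N$ that gets invoked is genuinely finiteness on an off-diagonal strip; everything else in the argument is routine monotonicity and the triangle-inequality choice of $N$.
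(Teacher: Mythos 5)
Your proposal is correct and follows essentially the same route as the paper's proof: sequential closedness via the choice of $N$ with $\OT_\infty(\mu,\mu_N)<\eps/2$, the application of Lemma \ref{lem:U_eps} with $U=\ow_\eps$ to get $\mu(\w_\eps)\leq\mu(\ow_\eps)\leq\mu_N(\w_{\eps/2})<\infty$, and then completeness from Theorem \ref{thm:completeness}. No meaningful differences to report.
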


Next, we show that $\mathcal{M}_f^\infty$ is the Cauchy completion of the subspace of finite measures. 

\begin{definition}\label{def:finite_measures} Let $\mathcal{M}^\infty_{\textup{fin}} := \{\mu\in \mathcal{M}^\infty \ | \ \mu(\mathbb{W}) <\infty\}$.
\end{definition}
It is clear from the definition that $\mathcal{M}^\infty_{\textup{fin}}\subset \mathcal{M}^\infty_f$.

\begin{theorem}\label{thm:completion_finite} 
$\mathcal{M}_f^\infty$ is the Cauchy completion of $(\mathcal{M}^\infty_{\textup{fin}}, \OT_\infty)$.
	\begin{proof} Since $(\mathcal{M}^\infty,\OT_\infty)$ is complete, it suffices to show that $\mathcal{M}^\infty_{\textup{fin}}$ is dense in $\mathcal{M}^\infty_f$. Let $\eps > 0$ and $\mu\in \mathcal{M}^\infty_f$ be given. Let $\mu_\eps$ be the measure on $\mathbb{W}$ given by $\mu_\eps(E) := \mu(E\cap \mathbb{W}_\eps)$ for all $E\subset \w$ Borel. Then $\mu_\eps(\mathbb{W}) = \mu(\mathbb{W}_\eps)$ so that $\mu_\eps\in \mathcal{M}^\infty_{\textup{fin}}$. We will show that $\OT_\infty(\mu,\mu_\eps) \leq \eps$ by constructing a coupling $\pi\in \Adm(\mu,\mu_\eps)$ with $C_\infty(\pi) \leq\eps$.
	
	Let $p:\mathbb{W}\to \Delta$ be the orthogonal projection map and let $\Delta_\eps:\w\to \ow\times \ow$ be given by $\mathbf{x}\mapsto (\mathbf{x},p(\mathbf{x}))$ if $d(\mathbf{x},\Delta)\leq\eps$ and $\mathbf{x}\mapsto(\mathbf{x},\mathbf{x})$ otherwise. Then $\Delta_\eps$ is Borel measurable and hence we may define $\pi:= (\Delta_\eps)_*\mu\in \mathcal{M}(\ow\times \ow)$. To see that $\pi\in \Adm(\mu,\mu_\eps)$, note that for $E\subset \mathbb{W}$ Borel we have
	\[\pi(E\times \ow) = \mu(\Delta^{-1}_\eps(E\times \ow)) = \mu(E),\]
	and
	\[ \pi(\ow\times E) = \mu(\Delta^{-1}_\eps(\ow\times E)) = \mu(E\cap \w_\eps) = \mu_\eps(E).\]
	Next, we claim that
	\begin{equation}\label{eq:spt_pi}
	\spt(\pi)\subset \{(\mathbf{x},\mathbf{y}) \subset \ow\times \ow \ | \ d(\mathbf{x},\mathbf{y}) \leq\eps\}.
	\end{equation}
	Indeed, let $(\mathbf{x},\mathbf{y})\in \ow\times \ow$ with $d(\mathbf{x},\mathbf{y})> \eps$. We will show that $(\mathbf{x},\mathbf{y})$ admits of neighborhood whose measure under $\pi$ is zero. First, note that it cannot be the case that both $\mathbf{x}, \mathbf{y}$ lie in $\Delta$ (or else we would have $d(\mathbf{x},\mathbf{y}) = 0$). Suppose that $\mathbf{x}\in\Delta$. Then we may choose open neighborhoods $U,V$ of $\Delta,\mathbf{y}$, respectively, with $d(U,V) > \eps$. Then $\Delta^{-1}_\eps(U\times V) = \emptyset$. Indeed, if $\mathbf{z}\in \Delta^{-1}_\eps(U\times V)$ then it must be the case that $\mathbf{z}\in U$. But then either $\mathbf{z} \in V$ or $d(\mathbf{z},\Delta) \leq\eps$, both of which are impossible since $d(U,V) >\eps$. Thus $\pi(U\times V) = \mu(\Delta^{-1}_\eps(U\times V)) = \mu(\emptyset) = 0$. A similar argument applies in the case that $\mathbf{y}\in \Delta$. Finally, suppose that neither of $\mathbf{x},\mathbf{y}$ lie in $\Delta$. In this case, we may choose open neighborhoods $U,V$ of $\mathbf{x},\mathbf{y}$, respectively, neither of which intersect $\Delta$ and with $d(U,V) > \eps$. But then $\Delta^{-1}_\eps(U\times V)$ is empty, since if $\mathbf{z}\in \Delta^{-1}_\eps(U\times V)$ then $\mathbf{z}\in U\cap V$, contradicting $d(U,V) > \eps$. Thus, in any case, we see that $(\mathbf{x},\mathbf{y}) \not\in \spt(\pi)$ whenever $d(\mathbf{x},\mathbf{y})> \eps$, which proves the claim.
	
	It now follows immediately from \eqref{eq:spt_pi} that $C_\infty(\pi) \leq \eps$ and thus $\OT_\infty(\mu,\mu_\eps)\leq \eps$, completing the proof.
\end{proof}
\end{theorem}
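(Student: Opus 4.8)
The plan is to leverage what is already in hand: by Theorem~\ref{thm:M_f_is_complete}, $\mathcal{M}^\infty_f$ is a complete metric space, so to identify it with the Cauchy completion of $(\mathcal{M}^\infty_{\textup{fin}},\OT_\infty)$ it suffices to prove that $\mathcal{M}^\infty_{\textup{fin}}$ is dense in $(\mathcal{M}^\infty_f,\OT_\infty)$. So the whole argument reduces to: given $\mu\in\mathcal{M}^\infty_f$ and $\eps>0$, produce a finite measure within $\OT_\infty$-distance $O(\eps)$ of $\mu$. The natural candidate is the truncation $\mu_\eps(E):=\mu(E\cap\mathbb{W}_\eps)$, obtained by deleting the mass of $\mu$ lying within an $\eps$-neighborhood of the diagonal. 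This lies in $\mathcal{M}^\infty_{\textup{fin}}$ exactly because off-diagonal finiteness gives $\mu_\eps(\mathbb{W})=\mu(\mathbb{W}_\eps)<\infty$, and $\pers_\infty(\mu_\eps)\le\pers_\infty(\mu)<\infty$ since $\spt(\mu_\eps)\subset\spt(\mu)$ up to closure; it remains only to bound $\OT_\infty(\mu,\mu_\eps)$.

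To estimate that distance I would exhibit an explicit admissible coupling rather than work with the infimum directly. The idea is to leave the "far from the diagonal" mass fixed and transport the discarded "near-diagonal" mass straight onto $\Delta$ via the orthogonal projection $p:\mathbb{W}\to\Delta$. Concretely, define a Borel map $\Delta_\eps:\mathbb{W}\to\ow\times\ow$ by $\mathbf{x}\mapsto(\mathbf{x},p(\mathbf{x}))$ when $d(\mathbf{x},\Delta)\le\eps$ and $\mathbf{x}\mapsto(\mathbf{x},\mathbf{x})$ otherwise, and set $\pi:=(\Delta_\eps)_*\mu$. A short computation of preimages shows $\pi(E\times\ow)=\mu(E)$ for all Borel $E$, and $\pi(\ow\times E)=\mu(E\cap\mathbb{W}_\eps)=\mu_\eps(E)$ for Borel $E\subset\mathbb{W}$, using that the second coordinate of $\Delta_\eps(\mathbf{x})$ lands off the diagonal precisely when $d(\mathbf{x},\Delta)>\eps$; hence $\pi\in\Adm(\mu,\mu_\eps)$.

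The crux is showing $C_\infty(\pi)\le\eps$, i.e.\ $\spt(\pi)\subset\{(\mathbf{x},\mathbf{y})\in\ow\times\ow : d(\mathbf{x},\mathbf{y})\le\eps\}$. Since the support is closed, it is not enough to observe that every pair in the image of $\Delta_\eps$ has $d\le\eps$ (by the definition of the branch and by $d(\mathbf{x},p(\mathbf{x}))\le\eps$); I must rule out bad \emph{limit} pairs. So I would show that any $(\mathbf{x},\mathbf{y})$ with $d(\mathbf{x},\mathbf{y})>\eps$ has an open product neighborhood $U\times V$ with $\pi(U\times V)=\mu(\Delta_\eps^{-1}(U\times V))=0$, by checking $\Delta_\eps^{-1}(U\times V)=\emptyset$. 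This requires a case split on whether $\mathbf{x}\in\Delta$, $\mathbf{y}\in\Delta$, or neither: in each case pick $U,V$ with $d(U,V)>\eps$ (and disjoint from $\Delta$ where appropriate), and note that a preimage point would have to lie in $U$, whence either it equals its image (forcing it into $V$) or it projects to $\Delta$ (forcing the image into a neighborhood of $\Delta$) — both contradicting $d(U,V)>\eps$. This topological bookkeeping over $\ow$, with $\Delta$ collapsed to a single point, is the main technical obstacle; once it is settled, $C_\infty(\pi)\le\eps$ gives $\OT_\infty(\mu,\mu_\eps)\le\eps$, and letting $\eps\to0$ establishes density and hence the theorem.
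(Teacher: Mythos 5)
Your proposal is correct and follows essentially the same route as the paper's proof: truncate to $\mu_\eps(E)=\mu(E\cap\mathbb{W}_\eps)$, push $\mu$ forward along the map $\mathbf{x}\mapsto(\mathbf{x},p(\mathbf{x}))$ or $(\mathbf{x},\mathbf{x})$ to build the coupling, and verify the support condition by the same case analysis on whether $\mathbf{x}$ or $\mathbf{y}$ lies on $\Delta$. Your explicit remarks that the support is closed (so image pairs alone do not suffice) and that $\pers_\infty(\mu_\eps)\le\pers_\infty(\mu)$ are small refinements the paper leaves implicit.
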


\subsection{Necessary Conditions for Relative Compactness in $\mathcal{M}_f^\infty$}
In this section we give necessary conditions for relative compactness in $\mathcal{M}_f^\infty$. These conditions are analogous to the necessary and sufficient conditions for relative compactness of subsets of persistence diagrams given in \cite{perea2022approximating}. However, in the measure-theoretic setting, these conditions fail to be sufficient. We end the section by providing an example to demonstrate why a full characterization of relative compactness is difficult (Lemma \ref{lem:OT_inf_dist_between_diracs}).

	Recall that a subset $S \subset \mathcal{M}^\infty_f$ is bounded if there is $\mu \in S$ and $M >0$ such that for all $\nu \in S$ we have $\OT_{\infty}(\mu, \nu) < M$. This gives our first necessary condition for compactness.

\begin{lemma}\label{lem:boundedness}
	If $S \subset \mathcal{M}^\infty_f$ is relatively compact then $S$ is bounded.
\end{lemma}

\begin{proof}
	Since $S \subset \mathcal{M}^\infty_f$ is relatively compact, $\overline{S}$ is compact. 
	Hence $\overline{S}$ is totally bounded which implies that $\overline{S}$ is bounded.
\end{proof}

Elements of $\mathcal{M}_f^\infty$ are by definition finite on all sets of the form $\ow_\eps$ for $\eps > 0$. Our next condition gives us uniform control of masses of the $\ow_\eps$ over all elements of a subset of measures.

\begin{definition}\label{def:UODF}
	A subset $S \subset \mathcal{M}_f^\infty$ is uniformly off-diagonally finite (UODF) if for every $\eps>0$, $\sup\{\mu(\ow_{\eps}) \ | \ \mu \in S\}  < \infty$.
\end{definition}

\begin{lemma}\label{lem:UODF}
	If $S \subset \mathcal{M}_f^\infty$ is relatively compact then $S$ is uniformly off-diagonally finite.
\end{lemma}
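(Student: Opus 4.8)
The plan is to argue by contradiction: suppose $S \subset \mathcal{M}_f^\infty$ is relatively compact but not UODF. Then there exists some $\eps_0 > 0$ with $\sup\{\mu(\ow_{\eps_0}) \mid \mu \in S\} = \infty$, so we may pick a sequence $(\mu_n)$ in $S$ with $\mu_n(\ow_{\eps_0}) \to \infty$. Since $\overline{S}$ is compact, hence sequentially compact (it is a metric space), after passing to a subsequence we may assume $\OT_\infty(\mu_n, \mu) \to 0$ for some $\mu \in \overline{S} \subset \mathcal{M}_f^\infty$ (using that $\mathcal{M}_f^\infty$ is complete, Theorem \ref{thm:M_f_is_complete}, so $\overline{S}$ stays inside it). In particular $\mu(\ow_{\eps_0}) < \infty$.

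Now I would use Lemma \ref{lem:U_eps} to derive a uniform bound contradicting $\mu_n(\ow_{\eps_0}) \to \infty$. Fix $N$ large enough that $\OT_\infty(\mu_N', \mu) < \eps_0/2$ for all indices beyond $N$ — more precisely, for each $n$ with $\OT_\infty(\mu_n, \mu) < \eps_0/2$, apply the remark following Lemma \ref{lem:U_eps} (taking $U = \ow_{\eps_0}$, so $U^{\eps_0/2} = \w_{\eps_0/2}$) to get
\[
\mu_n(\ow_{\eps_0}) \leq \mu(\w_{\eps_0/2}) \leq \mu(\ow_{\eps_0/2}).
\]
Since $\mu \in \mathcal{M}_f^\infty$, the right-hand side $\mu(\ow_{\eps_0/2})$ is a finite constant independent of $n$. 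But this bounds $\mu_n(\ow_{\eps_0})$ uniformly in $n$ (for all large $n$), contradicting $\mu_n(\ow_{\eps_0}) \to \infty$. Hence no such $\eps_0$ exists, i.e. $S$ is UODF.

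The main (only) subtlety is making sure the limit measure $\mu$ lies in $\mathcal{M}_f^\infty$ rather than merely $\mathcal{M}^\infty$, so that $\mu(\ow_{\eps_0/2}) < \infty$ can be invoked; this is exactly guaranteed by Theorem \ref{thm:M_f_is_complete}, which tells us $\mathcal{M}_f^\infty$ is closed in $\mathcal{M}^\infty$, so the closure $\overline{S}$ computed in $\mathcal{M}^\infty$ (where sequential compactness lives) actually sits inside $\mathcal{M}_f^\infty$. Everything else is a direct application of the interleaving-type Lemma \ref{lem:U_eps}. Alternatively, one could phrase the whole argument without subsequences by invoking total boundedness directly — cover $\overline{S}$ by finitely many $\OT_\infty$-balls of radius $\eps_0/4$ centered at points $\nu_1, \dots, \nu_k \in \mathcal{M}_f^\infty$, and then for any $\mu \in S$ pick the center $\nu_i$ within $\eps_0/4 < \eps_0/2$ and conclude $\mu(\ow_{\eps_0}) \leq \nu_i(\ow_{\eps_0/2}) \leq \max_i \nu_i(\ow_{\eps_0/2}) < \infty$ — which is arguably cleaner and avoids any appeal to sequential compactness.
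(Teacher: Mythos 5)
Your proof is correct and follows essentially the same route as the paper's: extract a convergent subsequence from the non-UODF sequence, note the limit lies in $\mathcal{M}_f^\infty$ so $\mu(\w_{\eps_0/2})<\infty$, and apply Lemma \ref{lem:U_eps} (via the remark with $U^{\eps_0/2}=\w_{\eps_0/2}$) to get a uniform bound contradicting the blow-up of $\mu_n(\ow_{\eps_0})$. The paper only differs cosmetically (it arranges strictly increasing integer parts of the masses and picks a single large index $K$), and your closing total-boundedness variant is a valid, slightly cleaner alternative.
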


\begin{proof}
	Suppose $S \subset \mathcal{M}_f^\infty$ is relatively compact and not UODF. Then there exist $\eps > 0$ and a sequence $(\mu_n)$ in $S$ such that $\lfloor \mu_n(\ow_{\eps}) \rfloor < \lfloor \mu_{n+1}(\ow_{\eps})\rfloor $ for all $n$. Since $S$ is relatively compact, there is a subsequence $(\mu_{n_k})$ of $(\mu_n)$ such that $(\mu_{n_k}) \to \mu$ in $\mathcal{M}_f^\infty$. Let $L=\mu(\w_{\eps/2})$. Since $\mu \in \mathcal{M}_f^\infty$, $L < \infty$. Then we can choose $K$ large enough that $\OT_{\infty}(\mu_{n_K}, \mu)< \frac{\eps}{2}$ and $\mu_{n_K}(\ow_{\eps}) > L$. However, by Lemma \ref{lem:U_eps},  $\mu_{n_K}(\ow_{\eps})  \leq \mu(\w_{\eps/2}) = L$ which leads us to a contradiction.
\end{proof}

Since elements of $\mathcal{M}_f^\infty$ are finite on each $\ow_\eps$, the restriction of an element $\mu\in \mathcal{M}_f^\infty$ to $\ow_\eps$ is \emph{tight}. That is, for all $\delta>0$, there exists a compact set $K\subset \ow_\eps$ such that $\mu(\ow\eps\backslash K)<\delta$. Our final condition gives us, for each $\eps>0$, uniform control on the compact set $K$.

\begin{definition}A subset $S \subset \mathcal{M}(X)$ is off-diagonally uniformly tight (ODUT) if for all $\eps > 0$ and $\delta > 0$, there exists $N\in \N$ such that
	$\sup\{\mu(\w_\eps \cap(\R\backslash[-N,N] \times \R) \ | \ \mu\in S\} < \delta$.	
\end{definition}

\begin{lemma} \label{lem:ODUT} If $S\subset \mathcal{M}^\infty_f$ is relatively compact then $S$ is off-diagonally uniformly tight.
\end{lemma}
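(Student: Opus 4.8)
The plan is to argue by contradiction, following the template used for Lemma~\ref{lem:UODF}. Suppose $S$ is relatively compact but not ODUT, so there are $\eps,\delta>0$ such that for every $N\in\N$ one has $\sup_{\mu\in S}\mu\big(\w_\eps\cap((\R\setminus[-N,N])\times\R)\big)\geq\delta$. For each $n\in\N$ pick $\mu_n\in S$ with $\mu_n\big(\w_\eps\cap((\R\setminus[-n,n])\times\R)\big)\geq\delta/2$. Since $S$ is relatively compact, a subsequence $(\mu_{n_k})$ converges to some $\mu\in\mathcal{M}^\infty_f$.

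Next I would use that $\mu\in\mathcal{M}^\infty_f$ forces $\mu(\w_{\eps/2})<\infty$. As $\mu$ is a Radon measure and $\w_{\eps/2}$ is open of finite $\mu$-mass, inner regularity yields a compact set $C\subset\w_{\eps/2}$ with $\mu(\w_{\eps/2}\setminus C)<\delta/4$; being compact in $\R^2$, $C$ lies in some vertical strip $[-M,M]\times\R$.

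Now fix $k$ large enough that $n_k>M+\eps/2$ and $\OT_\infty(\mu_{n_k},\mu)<\eps/2$, and set $U:=\w_\eps\cap((\R\setminus[-n_k,n_k])\times\R)$, a Borel subset of $\w_\eps$. Lemma~\ref{lem:U_eps}, applied to the pair $(\mu_{n_k},\mu)$, gives $\mu_{n_k}(\overline U)\leq\mu(U^{\eps/2})$. The crux is the containment $U^{\eps/2}\subset\w_{\eps/2}\setminus C$. On one hand, $U\subset\ow_\eps$, so $U^{\eps/2}\subset(\ow_\eps)^{\eps/2}=\w_{\eps/2}$ by the identity recorded right after Lemma~\ref{lem:U_eps}. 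On the other hand, if $(a,b)\in U^{\eps/2}$ then $\|(a,b)-(x,y)\|_q<\eps/2$ for some $(x,y)\in U$, whence $|a-x|\leq\|(a-x,b-y)\|_q<\eps/2$ and so $|a|>n_k-\eps/2>M$, giving $(a,b)\notin C$. Hence $\mu(U^{\eps/2})\leq\mu(\w_{\eps/2}\setminus C)<\delta/4$, and therefore $\delta/2\leq\mu_{n_k}(U)\leq\mu_{n_k}(\overline U)\leq\mu(U^{\eps/2})<\delta/4$, a contradiction.

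The step I expect to be the main obstacle is exactly this last containment: one must ensure that the $\eps/2$-thickening of the truncated off-diagonal region $U$ stays at once bounded away from the diagonal --- so that it sits inside $\w_{\eps/2}$, where $\mu$ is finite and therefore tight --- and bounded away from the strip $\{|x|\leq M\}$ --- so that it misses the compactum $C$ witnessing that tightness. Everything else is routine: assume ODUT fails, extract a bad convergent subsequence, and transport its mass estimate through Lemma~\ref{lem:U_eps} to contradict inner regularity of the limit.
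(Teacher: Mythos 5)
Your proof is correct and follows essentially the same route as the paper's: assume ODUT fails, extract a convergent subsequence from $S$, and push the mass lower bound $\mu_{n_k}(U_{n_k})\geq\delta/2$ through Lemma~\ref{lem:U_eps} to the limit measure $\mu$. The only (cosmetic) difference is the concluding step: the paper applies continuity from above to the nested sets $U_{n_k}^{\eps/2}$, which decrease to $\emptyset$ inside the finite-mass region $\w_{\eps/2}$, whereas you invoke inner regularity of $\mu$ on $\w_{\eps/2}$ and verify that $U_{n_k}^{\eps/2}$ eventually avoids the resulting compact set --- the same underlying fact that a finite Radon measure puts vanishing mass at infinity, reached by a slightly longer path.
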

\begin{proof}
	Suppose $S \subset \mathcal{M}_f^\infty$ is relatively compact but not ODUT. Then $S$ is bounded, so there exists $\eps>0$ and $\delta>0$ such that, for each $n\geq 1$, there exists $\mu_n \in S$ with $\mu_n(U_n) \geq \delta$, where $U_n=\w_{\eps} \cap (\R\backslash[-n,n] \times \R)$. Since $S$ is relatively compact,  there exists a subsequence $(\mu_{n_k})$ of $(\mu_n)$ and $\mu\in \mathcal{M}_f^\infty$ such that $\OT_\infty(\mu_{n_k},\mu) \to 0$. Thus, there exists $K\in \N$ such that for all $k \geq K$  we have  $\OT_{\infty}(\mu_{n_k}, \mu) < \frac{\eps}{2}$. Note that for all $k$, $U_{n_{k+1}}^{\eps/2} \subset U_{n_{k}}^{\eps/2}$ and $\bigcap_{k=1}^{\infty} U_{n_k}^{\eps/2} =\emptyset$. By continuity of the measure $\mu$ from above, $\lim_{k\to \infty} \mu(U_{n_k}^{\eps/2})= \mu(\bigcap_{k=1}^{\infty} U_{n_k}^{\eps/2} )=0$. However, by Lemma \ref{lem:U_eps}, for all $k \geq K$, we have $\delta \leq \mu_{n_k}(U_{n_k}) \leq  \mu_{n_k}(\overline{U}_{n_k}) \leq \mu(U_{n_k}^{\eps/2})$ which leads us to a contradiction.
\end{proof}

We have thus shown that if $S \subset \mathcal{M}^\infty_f$ is relatively compact then $S$ is bounded, UODF, and ODUT. The next example shows that these conditions are not sufficient.

\begin{counterexample}
	Let $\mathbf{x}\in \w$ and let $S = \{\mu_n:=\frac{1}{n}\delta_x \ | \ n\in \N\}$. Then $S\subset \mathcal{M}_f^\infty$, is bounded (since $\OT_\infty(\mu_n,0) = \pers_\infty(\mu_n) = d(\mathbf{x},\Delta)$ for all $\mu_n$), is UODF (indeed, $S$ is uniformly finite), and is ODUT (for any $\eps > 0$, any compact set $K\subset \w$ containing $\mathbf{x}$ will satisfy the definition of ODUT). Now suppose that $S$ is relatively compact. Then $(\mu_n)$ admits a subsequence $(\mu_{n_k})$ converging to some $\mu\in \mathcal{M}_f^\infty$ in the $\OT_\infty$ distance. By Proposition \ref{prop:OT_conv_implies_vague_conv_and_pers_conv}, $\mu_{n_k}$ converges vaguely to $\mu$. But $\mu_{n_k}(f) = \frac{1}{n_k}f(x) \to 0$ as $k\to \infty$ for all $f\in C_c(\w)$. Thus $\mu(f) = 0$ for all $f\in C_c(\w)$ so that $\mu = 0$ by the Riesz-representation theorem \ref{thm:Riesz}. Hence $\OT_\infty(\mu_{n_k},0)\to 0$ as $k\to\infty$. But then $\OT_\infty(\mu_{n_k},0) = \pers_\infty(\mu_{n_k}) = d(\mathbf{x},\Delta)$ for all $k$, contradicting the fact that $\OT_\infty(\mu_{n_k},0)\to 0$. Thus $S$ is not relatively compact.
\end{counterexample}

The next lemma describes the property of the space $(\mathcal{M}^\infty_f, \OT_\infty)$ that makes the characterization of its relatively compact sets rather complicated. 

\begin{lemma}\label{lem:OT_inf_dist_between_diracs} Let $\mathbf{x}\in \w$ and let $0 \leq \alpha < \beta$. Then $\OT_\infty(\alpha\delta_{\mathbf{x}}, \beta\delta_{\mathbf{x}}) = d(\mathbf{x},\Delta)$.
		\begin{proof}
		Let $\pi = \alpha\delta_{(\mathbf{x},\mathbf{x})} + (\beta-\alpha)\delta_{(\mathbf{x},\Delta)}$. Then $\pi\in \Adm(\alpha\delta_{\mathbf{x}},\beta\delta_{\mathbf{x}})$ with $C_\infty(\pi) = d(\mathbf{x},\Delta)$ and hence $\OT_\infty(\alpha\delta_{\mathbf{x}},\beta\delta_{\mathbf{x}}) \leq d(\mathbf{x},\Delta)$.
		
		To obtain the reverse inequality, let $\pi\in \Opt_\infty(\alpha\delta_{\mathbf{x}},\beta\delta_{\mathbf{x}})$. By Remark \ref{rem:coupling_supoprted_off_diagonal}, we may assume that $\pi(\Delta\times \Delta) = 0$. We will show that there is a point of the form $(\mathbf{x},\Delta)$ or $(\Delta,\mathbf{x})$ in $\spt(\pi)$, from which the desired inequality will follow. Suppose this is not the case. Then $\pi(\w\times \Delta) = \pi(\Delta\times \w) = 0$. Now for any Borel set $E\subset \w$ containing $\mathbf{x}$, we have $\pi(E\times \ow) = \alpha$ and $\pi(\ow\times E) = \beta$. On the other hand, if a Borel set $E'\subset \w$ does not contain $\mathbf{x}$ then $\pi(E'\times \ow) = \pi(\ow\times E') = 0$. We claim that $\pi(\ow\times (\ow\backslash \{\mathbf{x}\})) = \pi((\ow\backslash \{\mathbf{x}\}) \times \ow) = 0$. Indeed, from the facts and assumptions above, we have
		\[ \pi(\ow\times (\ow\backslash\{\mathbf{x}\})) = \pi(\w\times (\w\backslash \{\mathbf{x}\})) + \pi(\w\times \Delta)) +  \pi(\Delta \times (\w\backslash \{\mathbf{x}\})) +  \pi(\Delta\times \Delta) = 0,\]
		and similarly for $\pi((\ow\backslash \{\mathbf{x}\}) \times \ow)$. Then
		\begin{multline*} \beta = \pi(\ow\times \{\mathbf{x}\}) = \pi(\ow\times \{\mathbf{x}\}) + \pi(\ow\times (\ow\backslash \{\mathbf{x}\}))\\
		= \pi(\ow\times \ow) = \pi(\{\mathbf{x}\}\times \ow) + \pi((\ow\backslash \{\mathbf{x}\})\times \ow) = \pi(\{\mathbf{x}\}\times \ow) = \alpha,
		\end{multline*}
		contradicting the fact that $\alpha < \beta$. Thus either $(\mathbf{x},\Delta)\in \spt(\pi)$ or $(\Delta,\mathbf{x})\in \spt(\pi)$, and the result follows.	
	\end{proof}
	
\end{lemma}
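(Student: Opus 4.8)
The plan is to prove the two inequalities $\OT_\infty(\alpha\delta_{\mathbf{x}},\beta\delta_{\mathbf{x}}) \le d(\mathbf{x},\Delta)$ and $\OT_\infty(\alpha\delta_{\mathbf{x}},\beta\delta_{\mathbf{x}}) \ge d(\mathbf{x},\Delta)$ separately.

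For the upper bound I would exhibit an explicit coupling. Let $p:\w\to\Delta$ be the orthogonal projection onto the diagonal and set
\[ \pi_0 := \alpha\,\delta_{(\mathbf{x},\mathbf{x})} + (\beta-\alpha)\,\delta_{(p(\mathbf{x}),\mathbf{x})}. \]
Since admissibility only constrains masses of Borel subsets of $\w$, and the atom at $(p(\mathbf{x}),\mathbf{x})$ has first coordinate in $\Delta$, a direct computation of the two marginals shows $\pi_0\in\Adm(\alpha\delta_{\mathbf{x}},\beta\delta_{\mathbf{x}})$. Its support is contained in $\{(\mathbf{x},\mathbf{x}),(p(\mathbf{x}),\mathbf{x})\}$, so
\[ C_\infty(\pi_0) = \max\{d(\mathbf{x},\mathbf{x}),\, d(p(\mathbf{x}),\mathbf{x})\} = \|\mathbf{x}-\Delta\|_q = d(\mathbf{x},\Delta), \]
using the observation (recorded just after the definition of $d$) that $d(\mathbf{x},\mathbf{y}) = \|\mathbf{x}-\Delta\|_q$ whenever $\mathbf{y}\in\Delta$; hence $\OT_\infty(\alpha\delta_{\mathbf{x}},\beta\delta_{\mathbf{x}})\le d(\mathbf{x},\Delta)$.

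For the lower bound I would take an arbitrary $\pi\in\Adm(\alpha\delta_{\mathbf{x}},\beta\delta_{\mathbf{x}})$ and show $C_\infty(\pi)\ge d(\mathbf{x},\Delta)$. First, replacing $\pi$ by its restriction to $(\ow\times\ow)\setminus(\Delta\times\Delta)$ leaves both marginals unchanged (they live on $\w$, which is disjoint from $\Delta$) and only shrinks the support, hence does not increase $C_\infty$; so we may assume $\pi(\Delta\times\Delta)=0$. Since the first marginal is $\alpha\delta_{\mathbf{x}}$ we get $\pi\big((\w\setminus\{\mathbf{x}\})\times\ow\big)=0$, and as this set is open in $\ow\times\ow$ it follows that $\spt(\pi)\subset(\{\mathbf{x}\}\cup\Delta)\times\ow$; symmetrically $\spt(\pi)\subset\ow\times(\{\mathbf{x}\}\cup\Delta)$. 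The crux is then the claim that $\spt(\pi)$ must contain a point of the form $(\mathbf{x},\mathbf{z})$ or $(\mathbf{z},\mathbf{x})$ with $\mathbf{z}\in\Delta$: granting this, such a point has $d(\mathbf{x},\mathbf{z})=\|\mathbf{x}-\Delta\|_q=d(\mathbf{x},\Delta)$, so $C_\infty(\pi)=\sup_{\spt(\pi)}d\ge d(\mathbf{x},\Delta)$. To prove the claim I argue by contradiction: if no such point exists, then by the two support inclusions and $\pi(\Delta\times\Delta)=0$, each of $\{\mathbf{x}\}\times\Delta$, $\Delta\times\{\mathbf{x}\}$, $\Delta\times\Delta$ is a Borel set disjoint from $\spt(\pi)$, hence $\pi$-null by Lemma \ref{lem:cplmt_spt_null}, so $\pi=c\,\delta_{(\mathbf{x},\mathbf{x})}$ for some $c\ge 0$; matching the first marginal forces $c=\alpha$ and matching the second forces $c=\beta$, contradicting $\alpha<\beta$.

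The hard part is this lower bound: correctly extracting the support constraints from the marginals, handling the fact that the pseudometric $d$ collapses the entire diagonal, justifying the reduction to couplings with no mass on $\Delta\times\Delta$, and making the mass count on $\{\mathbf{x}\}$ versus $\ow$ airtight — this is the only place where $\alpha<\beta$ is used, and it is exactly what forces the ``diagonal cost'' $d(\mathbf{x},\Delta)$ to appear. The upper bound, by contrast, is routine once the coupling $\pi_0$ is written down.
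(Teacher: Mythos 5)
Your proof is correct and follows essentially the same route as the paper's: an explicit two-atom coupling for the upper bound, and for the lower bound a contradiction showing that any admissible plan with $\pi(\Delta\times\Delta)=0$ must have a support point pairing $\mathbf{x}$ with the diagonal, since otherwise the marginal computations force $\alpha=\pi(\{\mathbf{x}\}\times\ow)=\pi(\ow\times\ow)=\pi(\ow\times\{\mathbf{x}\})=\beta$. Your write-up is in fact slightly tidier in three small ways: you place the extra mass at $(p(\mathbf{x}),\mathbf{x})$ rather than the paper's $(\mathbf{x},\Delta)$ (which, given the convention $\pi(A\times\ow)=\mu(A)$, appears to have its coordinates transposed), you justify the reduction to $\pi(\Delta\times\Delta)=0$ directly by restricting to the open set $(\ow\times\ow)\setminus(\Delta\times\Delta)$ instead of citing a remark that does not appear in the text, and you bound the cost of every admissible coupling rather than assuming an optimal one exists.
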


The preceding lemma implies non-separability of the spaces $(\mathcal{M}_{f}^\infty,\OT_\infty)$ and $(\mathcal{M}^\infty,\OT_\infty)$.

\begin{proposition}\label{prop:non_separable} $(\mathcal{M}_f^\infty,\OT_\infty)$ and $(\mathcal{M}^\infty,\OT_\infty)$ are not separable.
	\begin{proof}
		It suffices to find an uncountable family $S\subset \mathcal{M}_f^\infty$ and $\eps_0 > 0$ such that for all distinct $\mu,\nu\in S$, $\OT_\infty(\mu,\nu) \geq \eps_0$. Let $\mathbf{x} \in \w$ and take $S := \{\alpha \delta_{\mathbf{x}} \ | \ \alpha > 0\}$ and $\eps_0 :=  d(\mathbf{x},\Delta)$. Then by Lemma \ref{lem:OT_inf_dist_between_diracs}, we have $\OT_\infty(\mu,\nu) = d(\mathbf{x},\Delta) = \eps_0$ for all distinct $\mu,\nu$ in $S$ and the result follows.
	\end{proof}
\end{proposition}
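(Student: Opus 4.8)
The plan is to exhibit an uncountable subset of $\mathcal{M}_f^\infty$ whose points are pairwise separated by one fixed positive distance; this immediately obstructs separability, and since $\mathcal{M}_f^\infty \subseteq \mathcal{M}^\infty$ the same family will handle both spaces simultaneously.

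First I would recall the elementary fact that if a metric space $(Y,\rho)$ contains an uncountable set $S$ with $\rho(\mu,\nu) \geq \eps_0 > 0$ for all distinct $\mu,\nu \in S$, then $Y$ is not separable. Indeed, the open balls $\{B_{\eps_0/2}(\mu)\}_{\mu \in S}$ are pairwise disjoint, so any dense subset of $Y$ must meet each of them, forcing it to be uncountable.

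Next I would choose the witnessing family. Fix any $\mathbf{x} \in \w$ and set $S := \{\alpha\delta_{\mathbf{x}} \mid \alpha > 0\}$. Each $\alpha\delta_{\mathbf{x}}$ is a finite Radon measure with $\pers_\infty(\alpha\delta_{\mathbf{x}}) = d(\mathbf{x},\Delta) < \infty$, so it lies in $\mathcal{M}^\infty_{\textup{fin}} \subseteq \mathcal{M}_f^\infty$; and since $\alpha \mapsto \alpha\delta_{\mathbf{x}}$ is injective, $S$ is uncountable. Finally, I would invoke Lemma \ref{lem:OT_inf_dist_between_diracs}: for distinct $\alpha\delta_{\mathbf{x}}, \beta\delta_{\mathbf{x}} \in S$, say with $\alpha < \beta$, we get $\OT_\infty(\alpha\delta_{\mathbf{x}},\beta\delta_{\mathbf{x}}) = d(\mathbf{x},\Delta) =: \eps_0 > 0$, a value independent of the chosen pair. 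Combining with the first step yields non-separability of both $(\mathcal{M}_f^\infty,\OT_\infty)$ and $(\mathcal{M}^\infty,\OT_\infty)$.

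As for the main obstacle: there is essentially none beyond Lemma \ref{lem:OT_inf_dist_between_diracs} itself, which carries the analytic content — pinning down the $\OT_\infty$-distance between two Dirac masses of different weights placed at the same point, by checking that any optimal coupling must transport some mass to the diagonal. The only points meriting care are the standard verification that an uncountable $\eps_0$-separated set precludes separability, and the membership $\alpha\delta_{\mathbf{x}} \in \mathcal{M}_f^\infty$, both of which are routine.
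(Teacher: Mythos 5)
Your proposal is correct and follows essentially the same route as the paper's proof: the same uncountable family $S = \{\alpha\delta_{\mathbf{x}} \mid \alpha > 0\}$, the same appeal to Lemma \ref{lem:OT_inf_dist_between_diracs} to get a uniform separation $\eps_0 = d(\mathbf{x},\Delta)$, and the same standard conclusion that an uncountable $\eps_0$-separated set obstructs separability. The extra details you supply (the disjoint-balls argument and the membership check $\alpha\delta_{\mathbf{x}} \in \mathcal{M}_f^\infty$) are correct routine verifications that the paper leaves implicit.
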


Since (relatively) compact subsets of a metric space are separable, Proposition \ref{prop:non_separable} is a considerable obstruction to a complete characterization of the relatively compact subsets of $\mathcal{M}_f^\infty$.

\section{Experiments} We tested our learning scheme on four different datasets for classification tasks.

\subsection{Synthetic Shapes} As a proof-of-concept, we used our pipeline to classify compact manifolds (possibly with boundary) endowed with the uniform measure. Using the python package \texttt{teaspoon}\footnote{https://github.com/lizliz/teaspoon}, we computed uniform samples of the sphere, torus, circle, and annulus. Unsurprisingly, we were able to completely classify by shape computing persistence of the Vietoris-Rips complexes of large enough samples.

To highlight the benefit that our method has over computing persistence on a single sample, we first sampled 300 copies of each shape with a small sample size of 10. We then applied our featurization method to the resulting persistence diagram. For this, we used a step function on $[-0.1, 0.1]^2$ as a kernel to approximate an expected persistence measure $\rho$ of each mesh. For a template function $f$ we chose a step function supported on a square of length 0.4. Our template system $\{f_i\}_i$ consisted of 28 translations of $f$ so that their supports not overlap and cover the whole rectangular region that encloses all the persistence diagrams in the (birth, persistence) plane. We then applied multiclass polynomial logistic regression for classification.

We then repeated this experiment, but this time, for each of the 300 copies of the four shapes, we sampled 10 points 10 times. For each sample, we computed the persistence diagram and then used these 10 persistence diagrams to estimate the expected persistence diagram. We then applied our learning scheme with the same kernel and template described above together with polynomial logistic regression. We repeated this experiment with 20 samples per object and 40 samples per object. The classification accuracies are shown in Table \ref{tab:synthetic}.

\begin{table}
\begin{center}
	\begin{tabular}{|l | c | c | c| c|}
		\hline
		 Number of Samples per Object & 1 & 10 & 20 & 40\\
		\hline
		Classification Accuracy & 65.42\% & 94.58\% & 99.17\% & 100.00\%\\
		\hline
	\end{tabular}
\end{center}
\caption{The classification accuracies of our shape experiment. 400 instances of each of the sphere, torus, circle, and annulus were considered. We performed our experiment 4 times, sampling each shape 1, 10, 20, 40 times in each experiment. In each experiment, we computed the expected persistence diagram, computed features using our pipeline, and used polynomial logistic regression for classification.}\label{tab:synthetic}
\end{table}
Table \ref{tab:synthetic} shows that, expectedly, applying persistence to samples of just 10 points from each shape results in poor classification accuracy. However, by repeatedly sampling 10 points from each shape 40 times and computing the expected persistence measure, our method is able to achieve perfect classification. Moreover, repeatedly sampling a small number of points is much less computationally costly than computing the persistence for a large sample of points.

%
\subsection{Animal Poses}
Here we test our method for classification on 3D meshes of 6 animals each in 9 various poses \cite{sumner2004deformation}. From each mesh  20 times we sampled 1000 points with respect to normalized eccentricity using inverse transform sampling. Eccentricity was computed using the python package \texttt{GUDHI}\footnote{https://pypi.org/project/gudhi/}. For each sample,  we computed a geodesic distance matrix based on 50-nearest neighbors and then used it to computed a Vietoris-Rips persistence homology of degree 1. Then we used a step function on $[-0.0001, 0.0001]^2$ as a kernel to approximate an expected persistence measure $\rho$ of each mesh. For a template function $f$ we chose a step function supported on a square of length 0.02. Our template system $\{f_i\}_i$ consisted of 196  translations of $f$ so that their supports not overlap and cover the whole rectangular region that encloses all the persistence diagrams in the (birth, persistence) plane. Then for each mesh we computed its feature vector $(\int f_1 \hat{\rho} dx, \int f_2 \hat{\rho} dx \dots, \int f_{196} \hat{\rho} dx)$  where $\hat{\rho}$ is the kernel density estimation of $\rho$. By performing 3D multidimensional scaling on feature vectors, we were able to split animal meshes into 6 clusters which is shown in \cref{fig:mds_poses}. Finally, we applied multiclass polynomial regression to 80\% of feature vectors.  The remaining 20\% of feature vectors we used for testing and were able predict to which animal class they belong with 100\% accuracy.


\subsection{Texture Images}\label{sec:textures}
In this experiment we applied our method to classify gray-scale texture images from SIPI database. There are $13$ different texture classes, and within each class  images are taken under $7$ distinct rotation angles. For every image, we randomly sampled $100$ $35\times 35$ patches. For every patch, we computed a $0$-dimensional persistence diagram using a sublevelset filtration on pixel intensities. We estimated the density $\rho$ of the expected persistence measure of every texture image by kernel density estimation using a step function on the square $[-0.1, 0.1]^2$ as the kernel. For the template system $\{f_i\}_i$, we picked $340$ translations of a step function supported on a square of length $10$ and such that supports of $f_i$'s do not overlap but cover the whole rectangular region that encloses all the persistence diagrams in the (birth, persistence) plane. For every image, we computed the feature vector $(\int f_1 \hat{\rho} dx, \int f_2 \hat{\rho} dx \dots, \int f_{340} \hat{\rho} dx)$  where $\hat{\rho}$ is the kernel density estimation of $\rho$. For computation efficiency, we removed all zero columns from the feature matrix which as the result shortened feature vectors to length $209$. The $3$D-visualization of the multidimensional scaling of the feature vectors is shown in \cref{fig:mds_textures}. Next, we applied a polynomial regression using $3$rd degree polynomial to the feature vectors of $80\%$ of the data. Then we performed multiclass logistic regression to learn the coefficients of the polynomial that best approximates the labels of our training data. We used the remaining $20\%$ of the data to test our model and got accuracy equal to $100\%$.

\begin{figure}[t]
	\begin{center}
		\begin{tabular}{c c}
			\multicolumn{1}{l}{(a)} & \multicolumn{1}{l}{(b)}\\
			\includegraphics[trim={2cm -2cm 2cm 2.5cm},clip,width=0.6\linewidth]{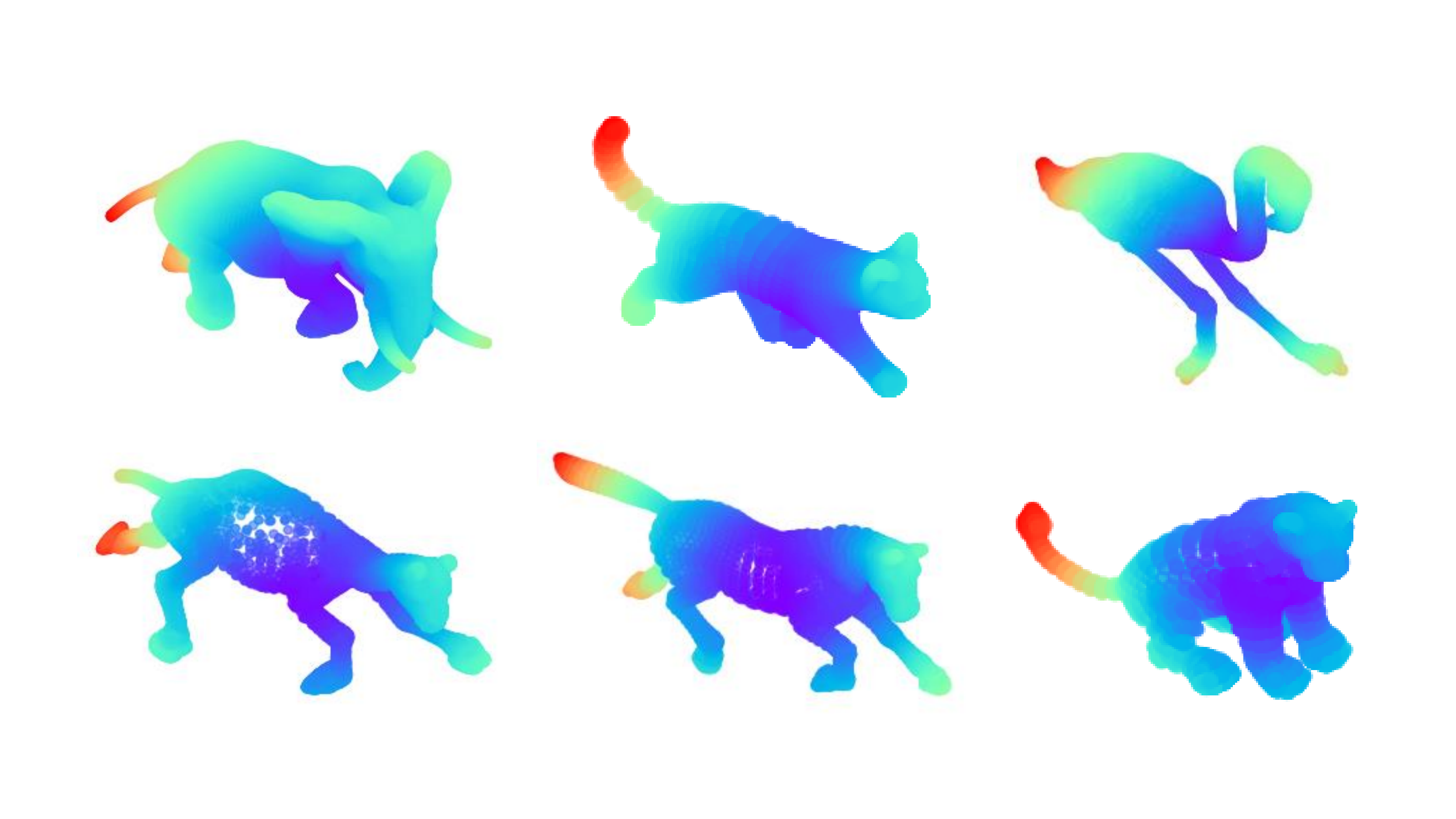}
			&\includegraphics[trim={4cm 0cm 0cm 2cm},clip,width=0.5\linewidth]{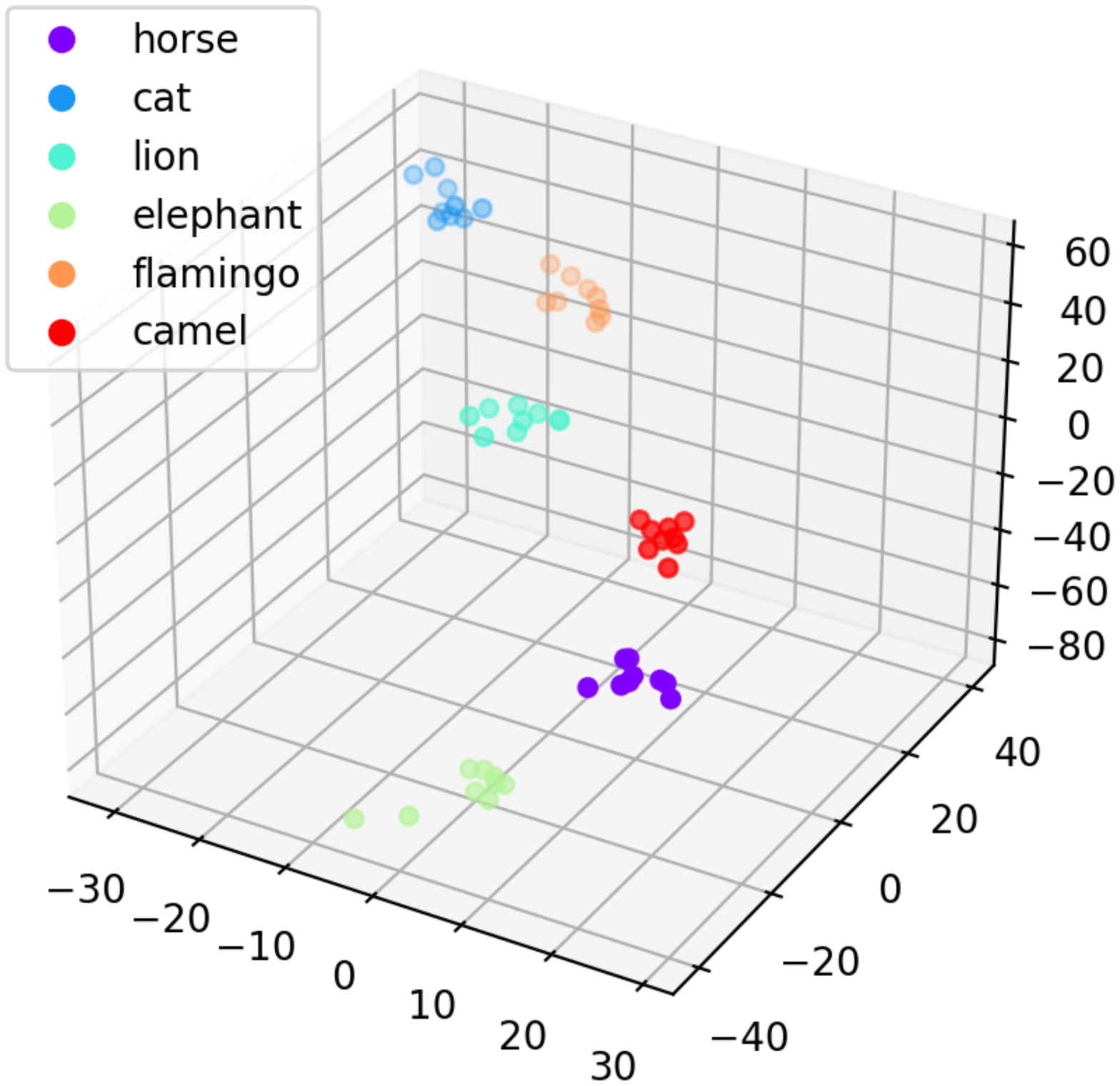} \\
		\end{tabular}
	\end{center}
	\caption{(a) Samples images of each of the 6 animal poses. Color corresponds to normalized eccentricity. (b) Multidimensional scaling of features feature vectors of each animal pose.}
	\label{fig:mds_textures}
\end{figure}

\begin{figure}[t]
	\begin{center}
		\begin{tabular}{c c}
			\multicolumn{1}{l}{(a)} & \multicolumn{1}{l}{(b)}\\
			\includegraphics[trim={6.5cm 3cm 6.5cm 3cm},clip,width=0.55\linewidth]{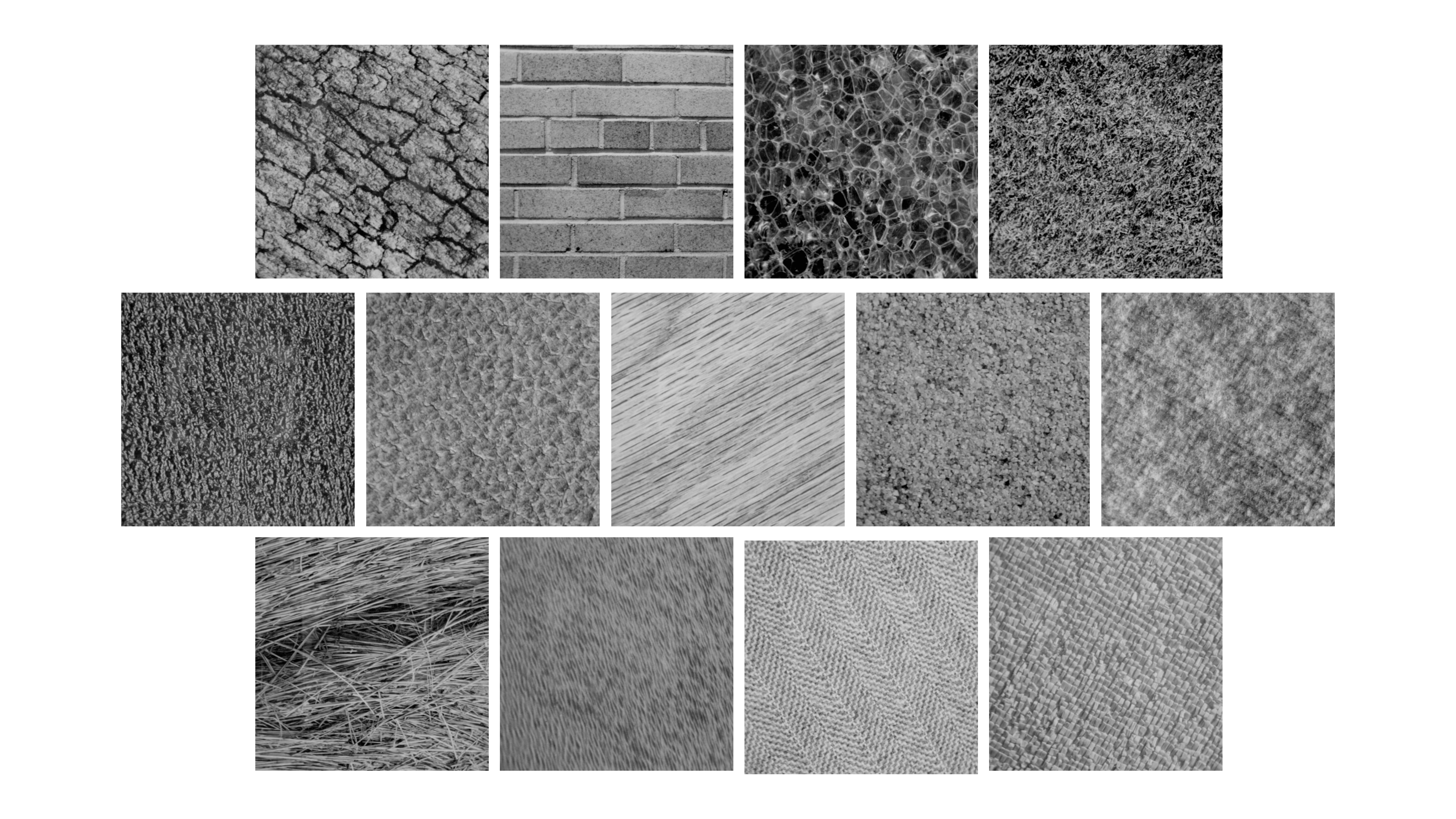}
			&\includegraphics[trim={0cm 0cm 0cm 0cm},clip,width=0.45\linewidth]{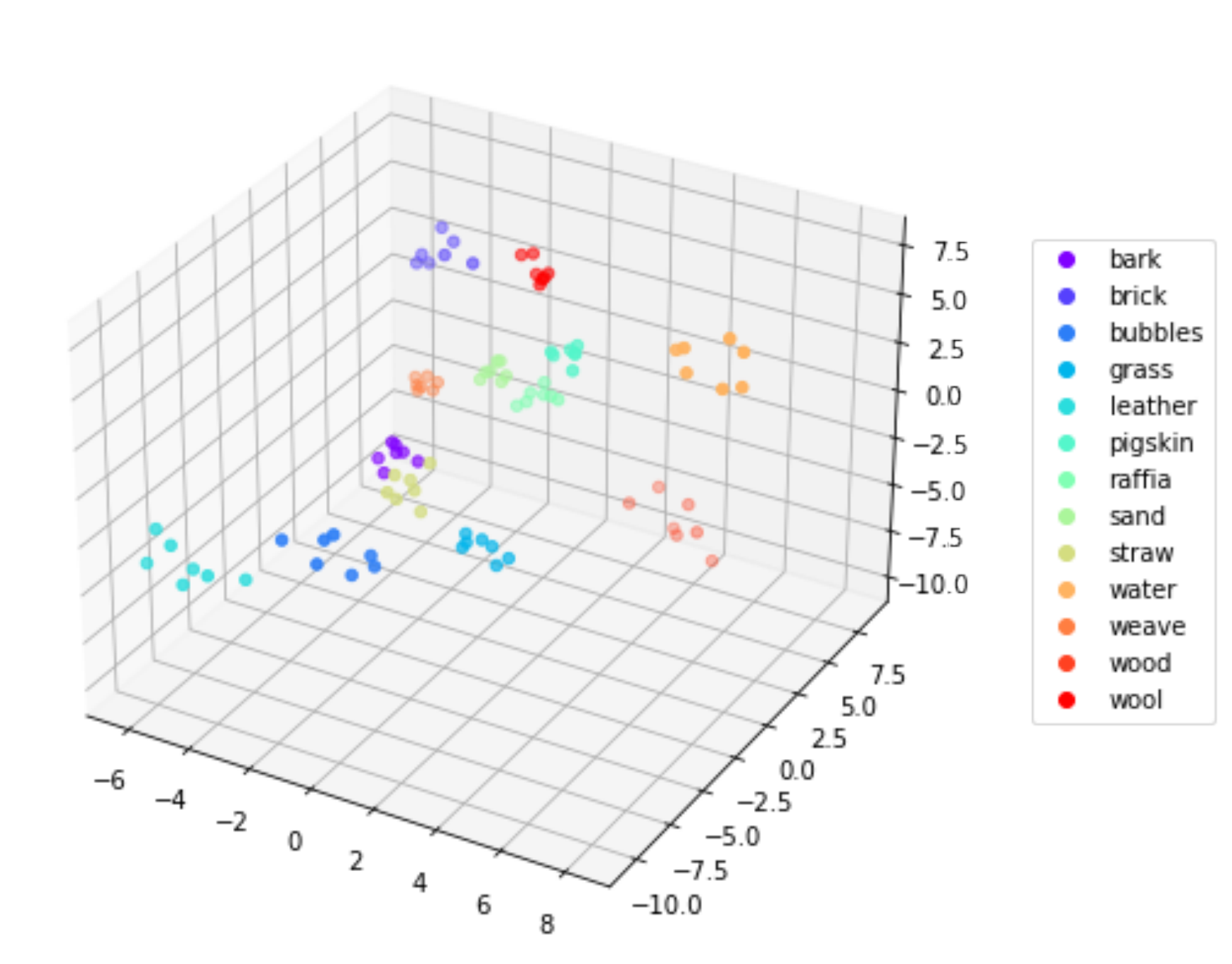} \\
		\end{tabular}
	\end{center}
	\caption{(a) Examples of each texture from the SIPI database. (b) Multidimensional scaling of the feature vectors of each texture image.}
	\label{fig:mds_poses}
\end{figure}

\subsection{Satellite Images of Cloud Patterns}

\begin{figure}[t]
	\begin{tabular}{c c c c}
			Sugar &Flower& Fish&Gravel \\
		\includegraphics[trim={0cm 0cm 0cm 0cm},clip,width=0.22\linewidth]{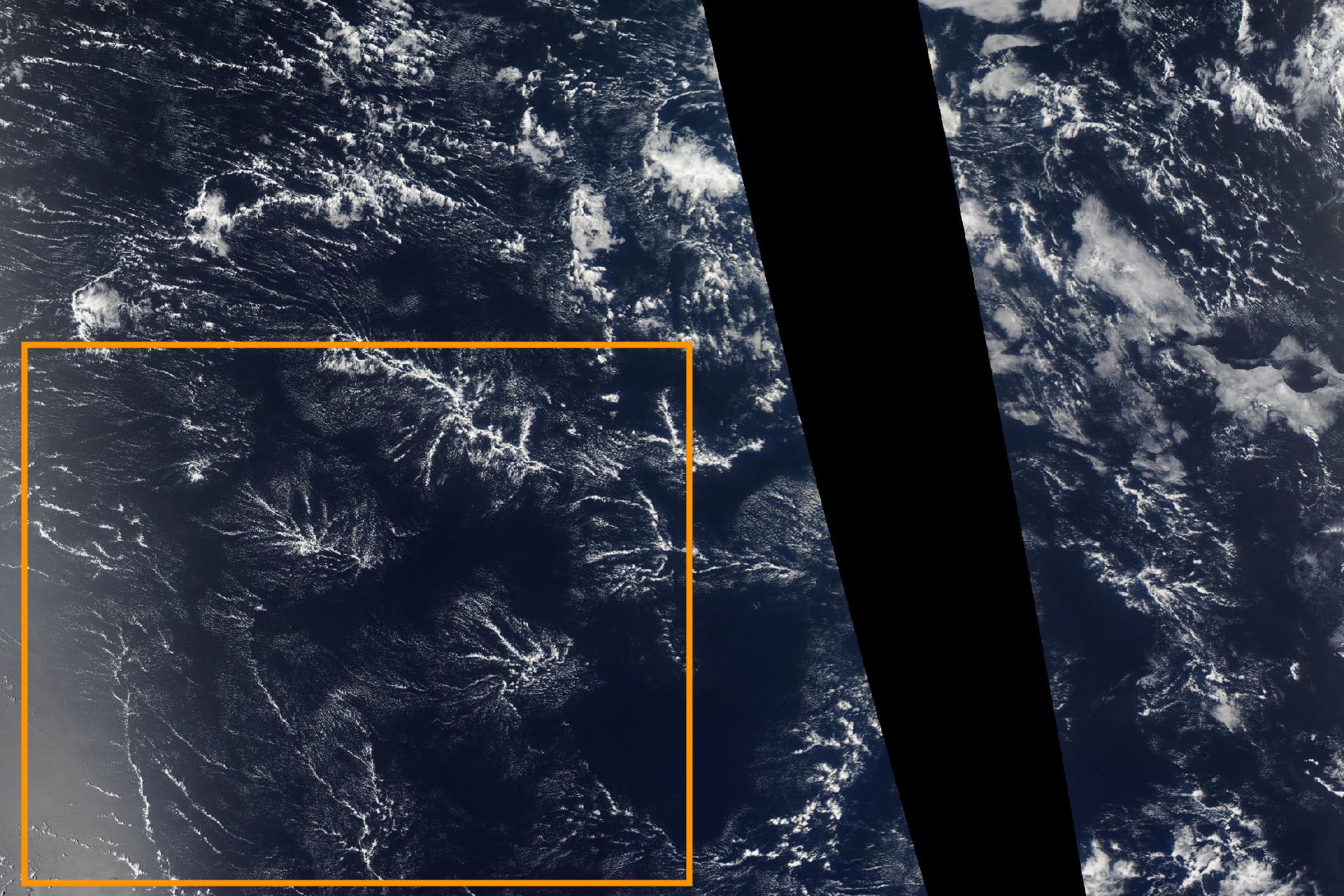} &
		\includegraphics[trim={0cm 0cm 0cm 0cm},clip,width=0.22\linewidth]{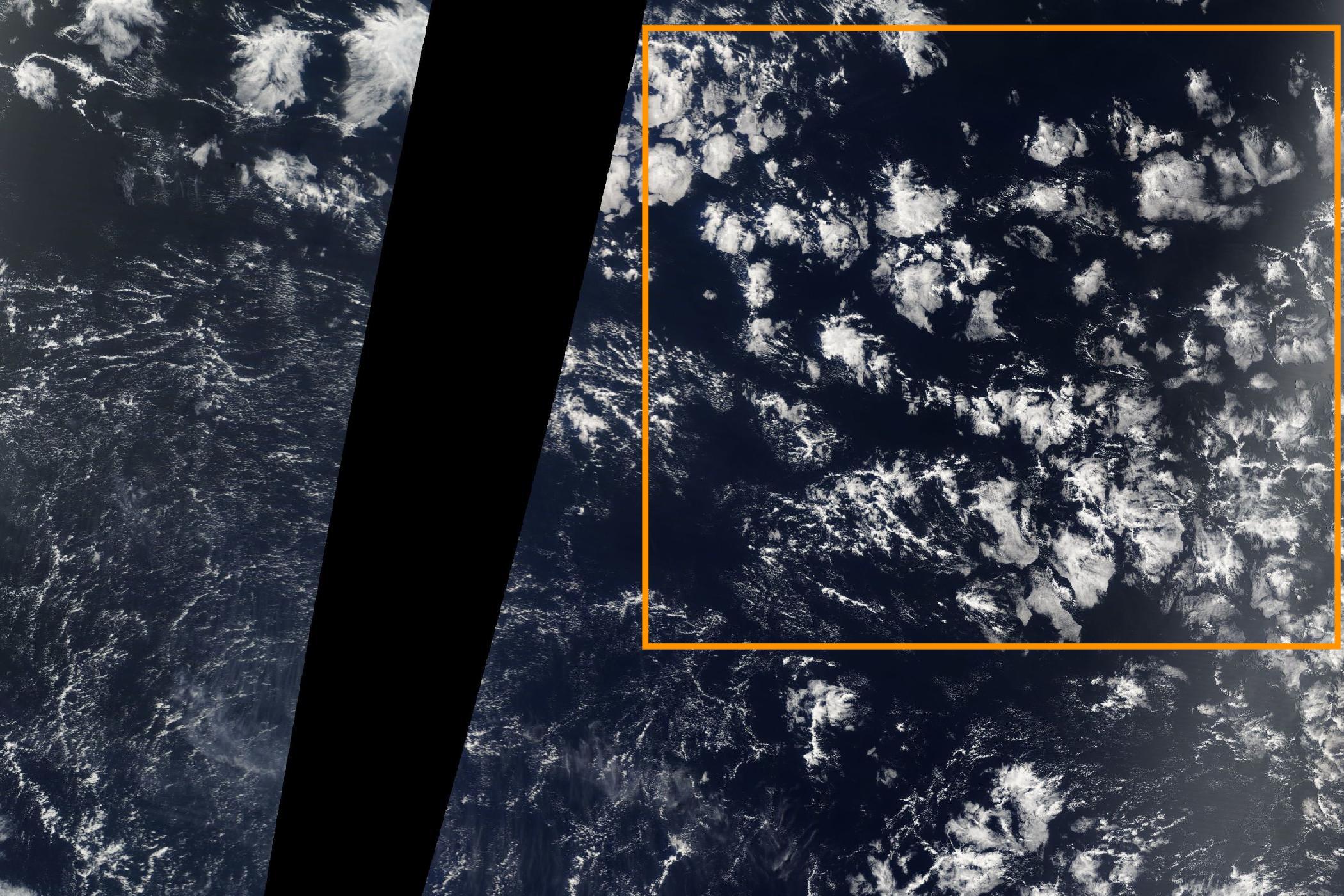} &
		\includegraphics[trim={0cm 0cm 0cm 0cm},clip,width=0.22\linewidth]{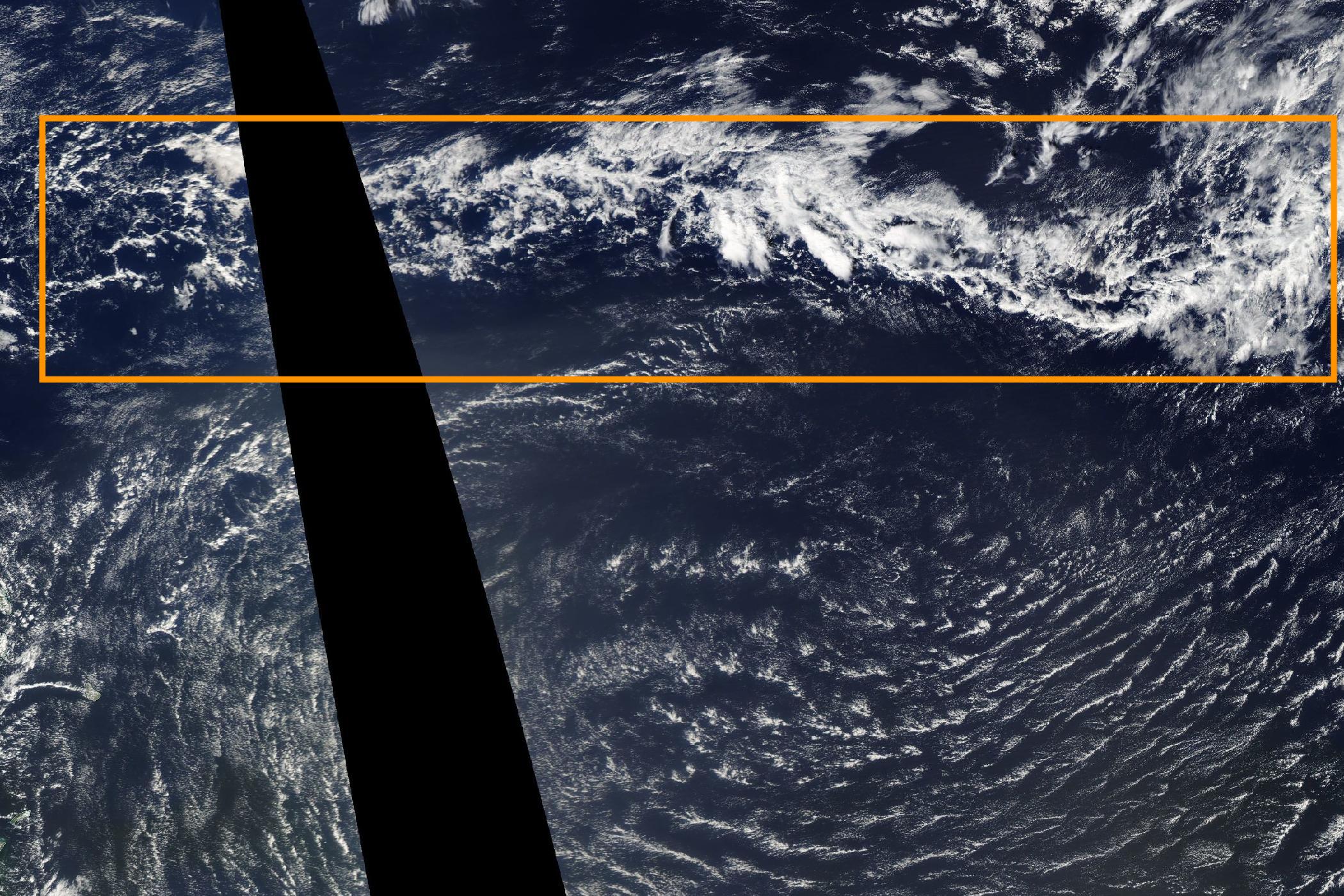} &
		\includegraphics[trim={0cm 0cm 0cm 0cm},clip,width=0.22\linewidth]{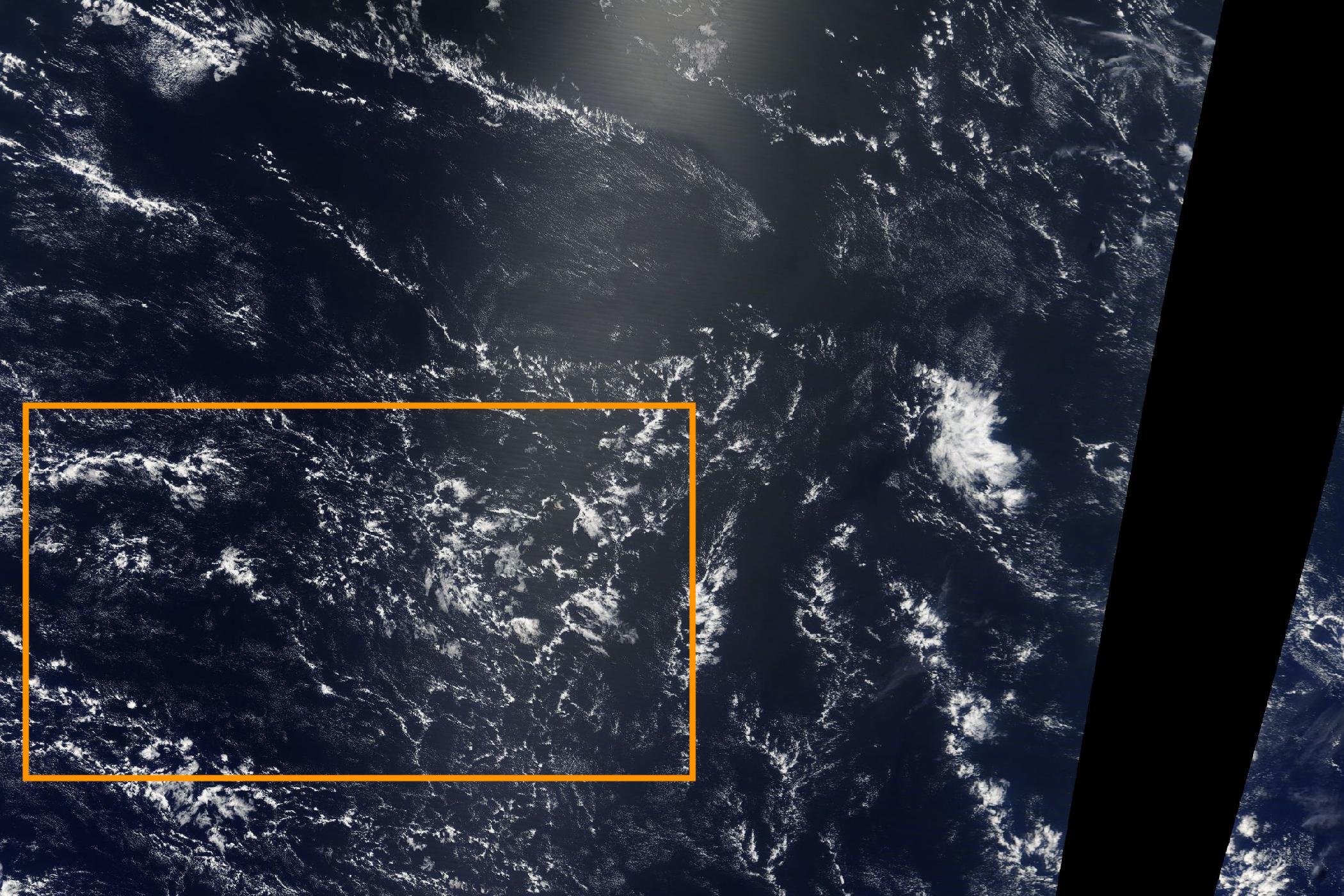}
\end{tabular}
		\caption{An example image from each of the 4 organization patterns with the annotations boxed in orange. These images are from the Aqua and Terra satellites and downlaoded from NASA Worldview using code from \cite{verhoef2022}. The labels and annotations are from \cite{rasp2020crowdsourcing}.}
		\label{fig:cloud_pics}
\end{figure}

Next, we applied our method to classifying cloud mesoscale organization patterns. This is an important task in environmental science, as it is known that understanding cloud organization patterns help predict future weather patterns \cite{stevens2020sugar}. In this example, we focus on 4 particular patterns: sugar, flower, fish, and gravel, as identified in \cite{stevens2020sugar}. We used data from \cite{rasp2020crowdsourcing}, where the authors used wide-scale crowdsourcing to reliably label 10,000 cloud images from the Aqua and Terra satellites with their respective clustering pattern. For each image, an expert in the field selected a rectangular region, called an \textit{annotation} and labeled that region with the correct pattern. In Figure \ref{fig:cloud_pics}, we see a sample image from each organization type and their respective annotation. In \cite{verhoef2022}, the authors show that topological data analysis is a good tool for classifying cloud images according to the 4 clustering types. Specifically, Ver Hoef et al. used persistence landscapes to train a support vector machine after projecting to 3 dimensions. In this section, we use our methods to train a logistic regression model on features arising from expected persistence diagrams from the 4 organization types.

We performed 3 binary classification tasks: gravel vs. flower, sugar vs. flower, and sugar vs. fish. For each organization pattern, we sampled 200 images and, from each image, we randomly sampled 50 $96 \times 96$ patches from the annotation. For each patch, we performed the same computations as in Section \ref{sec:textures}: that is, we computed persistence diagrams using sublevelset filtrations on pixel intensities and used the same kernel for kernel density estimation. In this example, our template system consisted of non-overlapping translations of a step function supported on a square of length 60, resulting in 12 translations $\{f_i\}_{i = 1}^{12}$ of our template function. Thus, if $\rho$ is the density of the expected persistence diagram of an image and $\hat{\rho}$ is its kernel density estimation, we computed a 12-dimensional feature vector $\left( \int f_1 \hat{\rho}dx, \int f_2 \hat{\rho}dx, \cdots \int f_{12} \hat{\rho}dx,\right)$ for each image. We trained a logistic regression model on the feature vectors from  80\% of the images and tested our model on the remaining 20\%. See Table \ref{tab:clouds} for how our method compares to that in \cite{verhoef2022}.

\begin{table}
\begin{center}
	\begin{tabular}{|l | c | c |}
		\hline
		Binary Classification & Accuracy in \cite{verhoef2022}& Our Accuracy \\
		\hline
		\hline
		Sugar vs. Fish & 86\% & 95\% \\
		\hline
		Sugar vs. Flower & 89.25\% & 93.75\% \\
		\hline
		Gravel vs. Flower & 81\% & 88.75\% \\
		\hline
	\end{tabular}
\end{center}
\caption{The classification accuracies of 3 binary classification tasks for cloud mesoscale organization patterns. We compare the results in \cite{verhoef2022}, where they used a support vector machine to classify the data, to our results, where we logistic regression. Notice that our method performs better in every task. }\label{tab:clouds}
\end{table}
\section{Discussion}
In this paper, we developed theoretical and computational methods for supervised learning on persistence measures. 
Specifically, we showed that the integration of Radon measures against re-scaled translates of any template $f \in C_c(\w)$ yield approximation results, in theory,
and expressive learning features in practice.
Persistence measures are thus a promising shape descriptor; in particular, for geometric objects of prohibitively large size --- so that a single persistent homology computation is unfeasible --- or in the presence of a measure or density function.
Several questions and research directions arise from this work. To name a few: can  statistical   bootstrap be leveraged to clarify  the relationship between the expected persistence measure, and  the persistence of the repeatedly sampled  metric measure space? is there a resulting probabilistic  stability theorem? can one fully characterize the compact subsets of $\mathcal{M}$ and $\mathcal{M}_f^\infty$ in terms of, perhaps, the  supports and moments of its elements?
We hope to pursue these and other questions in future work.

\subsection*{Acknowledgments}
We began this work at the American Mathematical Society's 2022 Mathematics Research Community ``Data Science at the Crossroads of Analysis, Geometry, and Topology", supported by the National Science Foundation under Grant Number DMS 1641020. J. A. Perea is partially supported by the National Science Foundation through grants CCF-2006661 and CAREER award DMS-1943758. The mesh data used in this project was made available by Robert Sumner and Jovan Popovic from the Computer Graphics Group at MIT. We would like to thank Lander Ver Hoef for providing code for preprocessing of the cloud satellite images dataset. We would like to thank Nate Mankovich and Josiah Hounyo for valuable programming advice.

\bibliographystyle{alpha}
\bibliography{bib}

\end{document}